\documentclass{article}

\usepackage{microtype}
\usepackage{graphicx}
\usepackage{subcaption}
\usepackage{booktabs} 

\usepackage{hyperref}

\usepackage{comment}

 \usepackage[preprint]{icml2026}

\usepackage{amsmath}
\usepackage{amssymb}
\usepackage{mathtools}
\usepackage{amsthm}
\usepackage{bm}

\usepackage[capitalize,noabbrev]{cleveref}

\theoremstyle{plain}
\newtheorem{theorem}{Theorem}[section]

\theoremstyle{definition}

\newtheorem{assumption}[theorem]{Assumption}
\theoremstyle{remark}

\usepackage[textsize=tiny]{todonotes}

\icmltitlerunning{Adaptive Quality-Diversity Trade-offs for Large-Scale Batch Recommendation}

\newcommand{\bbR}{\mathbb{R}}
\newcommand{\bbN}{\mathbb{N}}
\newcommand{\ITEM}[1]{\bm{\phi}^{#1}}
\newcommand{\USER}[1]{\bm{h}^{#1}}

\newcommand{\PREWARD}{\Theta}
\newcommand{\FREWARD}{q_{\PREWARD}}
\newcommand{\MREWARD}[1]{Q_{#1}}

\newcommand{\VOLUME}[1]{\text{vol}(#1)}

\newcommand{\KERNEL}{k}
\newcommand{\NBATCH}{B}
\newcommand{\NITEMS}{N}
\newcommand{\HORIZON}{T}
\newcommand{\HISTORY}[2]{\mathcal{H}^{#2}}
\newcommand{\HISTORYGT}[2]{\bm{\mathcal{H}}^{#2}}
\newcommand{\BATCH}[1]{\mathcal{S}^{#1}}
\newcommand{\DI}{d}
\newcommand{\DU}{m}
\newcommand{\TRACE}{\text{tr}}
\newcommand{\USERKERNEL}[4]{L^{#1}_{#2}(#3; #4)}
\newcommand{\GOODSET}[1]{\mathbb{S}_{t}}
\newcommand{\SETSCORE}[1]{\text{scr}}
\newcommand{\FEEDBACK}[1]{\bm{y}^{#1}}

\newcommand{\EG}{\textit{e}.\textit{g}.,\:}
\newcommand{\IE}{\textit{i}.\textit{e}.,\:}
\newcommand{\ALGO}{B-DivRec}

\begin{document}

\twocolumn[
  \icmltitle{Adaptive Quality-Diversity Trade-offs for Large-Scale Batch Recommendation}



  \icmlsetsymbol{equal}{*}

  \begin{icmlauthorlist}
    \icmlauthor{Cl\'{e}mence R\'{e}da}{aaa}
    \icmlauthor{Tomas Rigaux}{bbb}
    \icmlauthor{Hiba Bederina}{ccc}
    \icmlauthor{Koh Takeuchi}{bbb}
    \icmlauthor{Hisashi Kashima}{bbb}
    \icmlauthor{Jill-J\^{e}nn Vie}{ccc}
  \end{icmlauthorlist}

  \icmlaffiliation{aaa}{Ecole Normale Sup\'{e}rieure PSL, Paris, F-75005, France}
  \icmlaffiliation{bbb}{Kyoto University, Kyoto, J-606-8501, Japan}
  \icmlaffiliation{ccc}{Soda, Inria Paris Saclay, Palaiseau, F-91120, France}

  \icmlcorrespondingauthor{Cl\'{e}mence R\'{e}da}{reda@bio.ens.psl.eu}

  \icmlkeywords{recommender systems, serendipity, diversity, determinantal point processes, online learning}

  \vskip 0.3in
]



\printAffiliationsAndNotice{}  

\begin{abstract}
A core research question in recommender systems is to propose batches of highly relevant and diverse items, that is, items personalized to the user's preferences, but which also might get the user out of their comfort zone. This diversity might induce properties of serendipidity and novelty which might increase user engagement or revenue. However, many real-life problems arise in that case: e.g., avoiding to recommend distinct but too similar items to reduce the churn risk, and computational cost for large item libraries, up to millions of items. First, we consider the case when the user feedback model is perfectly observed and known in advance, and introduce an efficient algorithm called B-DivRec combining determinantal point processes and a fuzzy denuding procedure to adjust the degree of item diversity. This helps enforcing a quality-diversity trade-off throughout the user history. Second, we propose an approach to adaptively tailor the quality-diversity trade-off to the user, so that diversity in recommendations can be enhanced if it leads to positive feedback, and vice-versa. Finally, we illustrate the performance and versatility of B-DivRec in the two settings on synthetic and real-life data sets on movie recommendation and drug repurposing. 
\end{abstract}

\section{Introduction}\label{sec:introduction}

Preserving user engagement, that is, the willingness of users to query a recommender system and to interact with recommended items, is crucial and yet a difficult task. It is well-known that, beyond recommending merely the most popular items or those closest to the estimated user's interests, introducing diversity in recommendation is key to avoid the churn risk, \IE customer attrition~\citep{poulain2020investigating}. This topic has been widely studied under the name of ``diverse/novel recommendations'' or ``serendipity''~\citep{abbassi2009getting,kotkov2018investigating,ziarani2021serendipity}. The rationale behind it appears in several real-life contexts: for instance, diversity might increase revenue by keeping user engagement high in the movie streaming or music industry~\citep{van2013deep,anderson2020algorithmic}; get a teenager out of their comfort zone and make them discover new cultural goods~\citep{ibrahim2025diversified}; evaluate the global state of a student's knowledge on a specific subject in education~\citep{chavarriaga2014recommender,yanes2020machine}; or discover a first-in-class drug treatment in the pharmaceutical industry, where it has been shown that first-in-class molecules might generate higher revenue compared to well-known classes of molecules with a therapeutic advantage, called ``best-in-class''~\citep{schulze2013matters,spring2023first}. All in all, the goal in recommender systems is to satisfy apparently contradictory objectives: to recommend user-personalized items which introduce diversity in the user's history of recommended items.

In practice, 
one must also face the problems of developing computationally tractable pipelines on large item libraries
~\citep{cha2018drug}. Finally, 
the definition of diversity and quantifying the diversity in the recommended (batch of) items is tricky. 
Diversity might be understood in terms of ``intrabatch'' (local) or ``interbatch'' (global)~\citep{bederina2025bayesian}, respectively meaning that each batch of recommended items should be diverse, or the batch of recommended items should be diverse with respect to the user's prior history of selected items. \citet{xu2024serendipitous} 
tackles a novel problem of the literature, that is, 
tuning the level of diversity 
to the user based on a prior training set. However, this is not available at cold start. Tolerance to diversity might vary across users, and contributes to the user engagement. 
As we discuss in the next section, our work is one of the few works handling global diversity
~\citep{affandi2012markov,bederina2025bayesian,ibrahim2025diversified}. Furthermore, no other paper before has proposed an automated \textit{online} procedure for 
tuning the level of diversity. 


\section{Related works}\label{sec:related_works}

The topics addressed in this paper are related to a large section of the literature on recommendation (serendipity/out-of-the-box recommendation, avoidance of the churn risk, tractable recommender systems). We defer the reader to Appendix~\ref{app:related_works} for a more comprehensive review of the literature. 

\paragraph{Quality-Diversity Tradeoff metrics.} A closer look at the literature shows that there is no consensus metric capturing the quality-diversity tradeoff, sometimes called novelty. The proposed metrics are either use case-dependent (\EG user enjoyment in~\citet{wang2025beyond} for video recommendation), or only applicable with a predefined item similarity metric or user feedback model. This considerably restricts their reusability~\citep{abbassi2009getting,ziarani2021serendipity,kotkov2024dark,xu2024serendipitous}. A generic and intuitive tradeoff metric--instead of evaluating quality and diversity separately--is still missing. We turn to determinantal point processes to guide this tradeoff.

\paragraph{Determinantal Point Processes.} A point process is a distribution over finite subsets of a (finite) set $\Omega$. A determinantal point process (DPP) is a process where the probability of sampling a subset $A$ is correlated to the determinant of the kernel function $\KERNEL$ applied to this subset, that is, $\mathbb{P}(\BATCH{} = A) \propto \det \KERNEL_{A,A}$~\citep{macchi1975coincidence}. Here, we consider the definition of DPPs with $L$-ensembles, as described in~\citet{borodin2005eynard}, and $\KERNEL_{\Omega,\Omega}$ is a positive semi-definite matrix. Intuitively, the higher the volume formed by the item embeddings in this subset, the higher the probability. 
Conditioning over another subset $\HISTORY{}{}$ of items--that is, sampling a subset which is diverse compared to previously selected set of points--can be described with the following distribution~\citep{borodin2005eynard} $\mathbb{P}(\BATCH{} = A \cup \HISTORY{}{} \mid \BATCH{} \supseteq \HISTORY{}{}) \propto \det\big(\KERNEL_{A,A}-\KERNEL_{A,\HISTORY{}{}} \KERNEL_{\HISTORY{}{},\HISTORY{}{}}^{-1} \KERNEL_{A,\HISTORY{}{}}^\intercal \big)$. This is a simple approach to integrating the user history to the recommendation
. However, conditioning has a dependency in $\Omega(|\HISTORY{}{}|^3)$ where $\HISTORY{}{}$ is the user history, and inversion of the history-related kernel matrix might be expensive. 
To tackle the issue of recommending items with high quality/relevance and high ``intrabatch''/local diversity, \citet{kulesza2010structured} introduced the quality-diversity (QD) decomposition of a DPP. Given $\NITEMS$ item embeddings $\{\ITEM{i}\}_{i \in \Omega}$ such that $\|\ITEM{i}\|_2=1$ for any $i \in \Omega$, and positive quality scores 
for each item and a fixed user, 
the QD decomposition is the distribution $\mathbb{P}(\BATCH{} = A ) \propto \det(Q_{A} \Phi_{A} \Phi_{A}^\intercal Q_{A})$, where $Q_{A}$ is a diagonal matrix of size $\NBATCH \times \NBATCH$ which contains the quality scores for each item $i_1,\dots,i_\NBATCH$ in $A$, and $\Phi_{A} := [\ITEM{i_1},\dots,\ITEM{i_\NBATCH}]^\intercal$ is the row-concatenation of all item embeddings in $A$. Many papers relied on this approach~\citep{gong2014diverse,wilhelm2018practical,zhan2021multiple,svensson2025diverse,xuan2025diverse}. However, this decomposition might be 
restrictive, as the control of the quality-diversity trade-off is limited, and only linear or RBF~\citep{affandi2014learning,wilhelm2018practical} kernels are considered.  \citet{affandi2012markov} is closer to our objective of increasing the ``interbatch'' or global diversity across consecutive recommendation rounds. Authors introduce Markov DPPs, which were applied to the daily recommendation of news headlines. The main idea is to condition the subset of items sampled at round $t+1$ on the subset sampled at round $t$. 
However, their procedure is 
not tractable even for moderately large data sets, 
as shown 
in Section~\ref{sec:experiments}. 

\paragraph{Greedy recommender systems for (local) diversity.} Besides DPPs, there is a plethora of other approaches, many of them~\citep{li2024contextual} relying or comparing to Maximal Marginal Relevance (MMR)~\citep{carbonell1998use}. The problem originally tackled by MMR is slightly different from ours: given a library of items, an item similarity function and a quality-diversity trade-off parameter $\lambda \in [0,1]$, the goal is to return an item $i_t$ at round $t$ which is both relevant to the query item $\USER{\text{qr}}$ and diverse with respect to previously selected items $i_1,i_2,\dots,i_{t-1}$. 
However, it is easy to rewrite the expression in MMR to fit the setting where a user context and history is provided as query. Instead of a similarity function, we consider a kernel function $\KERNEL$. For $b \in \{1,\dots,\NBATCH\}$, at time $t$, MMR selects item 
\[i^{t+b}  :=  \arg\max_{i \in \Omega} \lambda \ R_{i,t} 
-   \max_{j \in \HISTORY{\USER{t}}{t}\cup\{i^{t},.,i^{t+b-1}\}} (1-\lambda) \KERNEL(\ITEM{i},\ITEM{j}) \;.\] 
Yet, MMR only performs item-pairwise comparisons and greedily builds the set of recommendations, while 
the literature on DPPs could be leveraged to sample a whole diverse subset with a complexity linear in the number of items. 

\paragraph{User intent-aware models.} A more recent subfield considers neural slate optimization or reinforcement learning models, which implement a diversification/whole-page optimization step after one or two relevance-based filtering steps on items. The goal is to retrieve relevant items that maximally cover different aspects of the user's tastes or query. However, \citet{carraro2025enhancing} relies on LLMs, which induces a privacy issue and lacks control of the tradeoff. Furthermore, the methods in~\citet{liu2023personalized,roy2025don,wang2025beyond} are meant to be applied at small scale on a prefiltered set of candidate items, as they have time complexities polynomial in $N$, and require the definition of behavioral sequences~\citep{liu2023personalized}, subqueries~\citep{santos2010exploiting} or categorical variables~\citep{zhang2021conditional} to cover, which restricts their flexibility. Conversely, our framework only requires item embeddings/feature vectors, which should contain all information needed to assess diversity.

\section{Setting: Quality-Diversity Tradeoff (QDT)}\label{sec:setting}

\paragraph{Problem setting.} We denote $\NITEMS$ the total number of items (potentially of the order of millions) in the universe $\Omega := \{1,2,\dots,\NITEMS\}$. Recommendation rounds happen sequentially: (1) a (possibly new) user $\USER{t}$ queries the recommender system at each time $t$, (2) the recommender system returns a fixed-size batch $\BATCH{t} \subset \Omega$ of $\NBATCH$ items to present to the user, (3) user $\USER{t}$ outputs feedback values $\FEEDBACK{t} := \{y^t_1,\dots,y^t_\NBATCH\}$ for each recommendation. We choose to ignore the specific identity of items and users, and instead, respectively define them with the item embedding $\ITEM{i} \in \bbR^\DI$ for item $i$, or by the \textit{summary} of user history $\USER{t} \in \bbR^\DU$ at each recommendation round $t$. Ideally, $\USER{t}$ captures all the information about the items the user has \textit{previously interacted with} (\EG $\USER{t}$ can describe how much one of the $\DU$ categories is liked). It can be hard-coded (one-hot encoding of liked item categories), or an embedding learned over previously collected items, or the result of some dimension reduction algorithm applied to the embeddings of positively interacted with items. We denote $\HISTORY{\USER{}}{t}$ the user history of prior recommendations to user $\USER{t}$ up to the round $t$ (not included). We define item similarity flexibly with a kernel function $k : \bbR^\DI \times \bbR^\DI \mapsto \bbR$. $I_n$ is the identity matrix of size $n$. 

Finally, as we do not have access to an online setting where we could directly interrogate users, we set up an offline framework on observed ratings based on estimated feedback values from a backbone model $\FREWARD : \bbR^\DI \times \bbR^\DU \rightarrow \bbR$, where $\FREWARD(\ITEM{}, \USER{})$ is the expected feedback for user $\USER{}$ on item $\ITEM{}$. 
In the case of a  feedback model, if $\ITEM{}$ is the $k^\text{th}$ recommended item to user $\USER{t}$ at time $t$, then $y^t_k = \FREWARD(\ITEM{}, \USER{t})$. $\FREWARD$ can be implemented with any model, provided the latter satisfies some rather unrestrictive assumptions (see Section~\ref{sec:ALGO}), and we use different models in our experiments in Section~\ref{sec:experiments}. Figure~\ref{fig:setting} illustrates this setting. 

\begin{figure}[h]
\begin{center}
\includegraphics[width=0.49\textwidth]{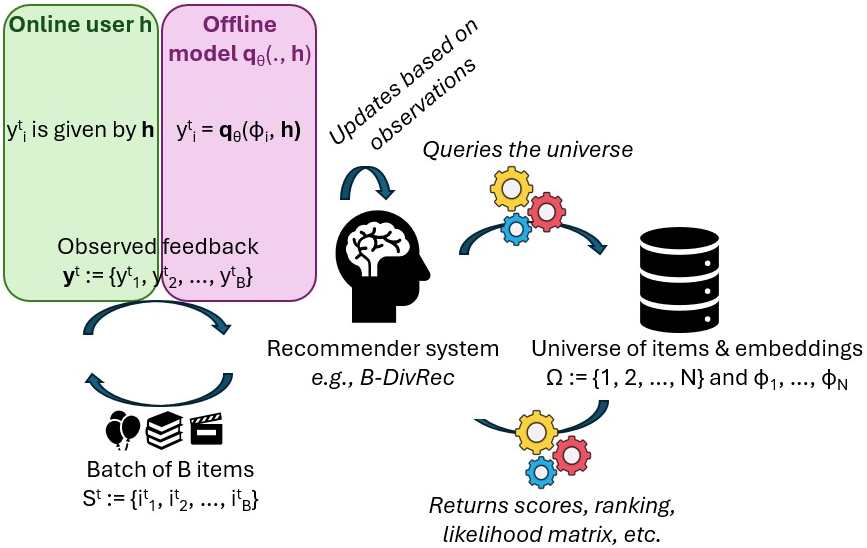}
\end{center}
\caption{Recommendation setting in our paper.}
\label{fig:setting}
\end{figure}

\paragraph{Metrics.} 
To quantify the quality and the diversity of recommendations, we define the following pointwise metrics for any round $t > 0$. 
Quality (also called relevance in this paper and denoted $\text{rel}$) is simply defined by the expected click-through rate. We also consider the precision metric (denoted $\text{prec}$), that is, the ratio of positively rated items (\IE such that the feedback value is higher than a given threshold $\tau>0$) over the total number of recommended items. 
Moreover, we define two types of diversity: the \textit{intrabatch/local diversity} (denoted $\text{div}^\text{L}$), focusing on the diversity inside a batch of $\NBATCH$ items; and the \textit{interbatch/global diversity} (denoted $\text{div}^\text{G}$), which looks at the diversity of the previously and currently recommended items, meaning that it also takes into account the user history. 
An intuitive idea of how diverse a set $\BATCH{}$ of items is can be obtained by computing the volume of the parallelotope induced by the rows of the kernel matrix $K_{\BATCH{},\BATCH{}} := (k(\ITEM{i},\ITEM{j}))_{i,j \in \BATCH{}}$ built from the item embeddings in $\BATCH{}$. The volume of a set $\BATCH{} := \{i_1, \dots, i_{\NBATCH}\}$ of items is $\VOLUME{\BATCH{}} := (\det K_{\BATCH{},\BATCH{}} )^{1/2}$.~\footnote{By definition of a kernel function, this definition of volume is well-defined for any $\BATCH{}\subseteq \Omega$.} At each recommendation round $t$, we define
\begin{eqnarray}\label{eq:def_metrics}
    && \forall t > 0, \ \text{rel}(\BATCH{t}) := \frac{1}{\NBATCH} \sum_{i \in \BATCH{t}} \FREWARD(\ITEM{i}, \USER{t})\:,  \\
    & & \text{div}^\text{L}(\BATCH{t}) := \VOLUME{\BATCH{t}}\:, \text{ and } \text{div}^\text{G}(\BATCH{t}) := \VOLUME{\BATCH{t} \cup \HISTORY{\USER{t}}{t}}\:.  \nonumber
\end{eqnarray}
To assess the performance accrued by a strategy across $\HORIZON$ recommendation rounds, we consider the average of $\text{rel}$, $\text{prec}$, $\text{div}^\text{L}$ and $\text{div}^\text{G}$ across rounds. Finally, we introduce a model- and similarity-agnostic quality-diversity tradeoff metric. 
Given a 
threshold $\tau$, 
the effective diversity $\text{div}^{+}$ contributed by positively-labeled items among recommended batches $\mathbb{S} := \{\BATCH{1},\dots,\BATCH{T}\}$ for 
user $\USER{}$ after $\HORIZON$ rounds is 
\begin{eqnarray}\label{eq:qdt_metric}
\text{div}^{+}(\mathbb{S}, \USER{}) := \VOLUME{\Big\{ i \mid i \in \bigcup_{s < t} \BATCH{s}, \FREWARD(\ITEM{i}, \USER{}) \geq \tau  \Big\}}\:. 
\end{eqnarray}
Albeit the threshold $\tau$ has a large impact on this metric, selecting $\tau$ in real-life settings is usually intuitive. For instance, $\tau$ might be the number of rating stars needed for a good movie in movie recommendation~\citep{harper2015movielens}, whereas it might be the probability of success of a clinical trial in drug repurposing~\cite{liang2017lrssl}.


\paragraph{Contributions.} Any recommender system considered in this paper is applied \textit{directly on the whole library of items}. Contrary to prior works~\citep{ibrahim2025diversified,wang2025beyond} that require several prefiltering by relevance and diversification steps to be tractable, we think that using a single recommendation step on a large library has a better chance of achieving the QDT. It means that anything beyond a time complexity linear in the number of items $\NITEMS$ will be intractable. Our objective is to design a \textit{generic and tractable} recommender system 
implementing the quality-diversity trade-off 
for a possibly large number of items up to millions. First, we introduce an efficient algorithm called \ALGO{} combining determinantal point processes and a fuzzy denuding procedure to adjust the degree of item diversity compared to the user history 
(Section~\ref{sec:ALGO}). Second, we introduce a flexible algorithmic layer to adaptively tune the quality-diversity trade-off to the received user feedback across recommendation rounds, which works for any recommender system in our framework 
(Section~\ref{sec:adaptive_lambda}). Finally, we illustrate the performance and versatility of \ALGO{} on synthetic and public real-life data sets 
(Section~\ref{sec:experiments}). 

\section{A generic approach for the QDT}\label{sec:ALGO}

In this section, we define our trade-off on relevance and diversity, instead of a multi-objective task~\citep{abbassi2009getting} or a multi-step process for filtering relevance or diversity~\citep{yuan2016discovering,ibrahim2025diversified}. We recall that $\KERNEL : \bbR^\DI \times \bbR^\DI \rightarrow \bbR$ is a similarity kernel on items. To use a quality-diversity decomposition, we need:

\begin{assumption}{\textnormal{Positive-definite kernel.}}\label{as:kernel}
    The kernel $\KERNEL$ is a \textnormal{positive-definite kernel}, that is, there exists $d' \in \bbN^*$ and a feature map function $\nu : \bbR^\DI \rightarrow \bbR^{\DI'}$ such that, for any $\bm{x}, \bm{y} \in \Omega$, $\KERNEL(\bm{x}, \bm{y}) = \nu(\bm{x})^\intercal \nu(\bm{y})$. $\nu$ is extended to subsets of items: $\nu(\BATCH{}) := (\nu(\ITEM{i}))_{i \in \BATCH{}} \in \bbR^{\NBATCH \times \DI'}$ for any $\BATCH{} \subseteq \Omega$. 
\end{assumption}

\begin{assumption}{\textnormal{Positive unbounded feedback.}}\label{as:feedback_model}
    The feedback model $\FREWARD$ has values in $\mathbb{R}_+^{*}$. 
\end{assumption}

\begin{assumption}{\textnormal{Unit embeddings.}}\label{as:bounded_norm}
    $\|\ITEM{}\|_2=\|\USER{}\|_2=1$ for all user embeddings $\USER{} \in \bbR^\DU$ and item embeddings $\ITEM{} \in \Omega$.
\end{assumption}

Note that those assumptions are only moderately restrictive, as adequate kernel functions (linear, RBF, Mat\'{e}rn~\citep{williams2006gaussian,duvenaud2014kernel}) and classification or regression models with positive values~\citep{lee1999learning,wood2017generalized} abound in the literature, and renormalizing embeddings is a common procedure. 


\paragraph{The Disentangled Quality-Diversity (DQD) family.} Under these assumptions, we extend the quality-diversity decomposition~\citep{kulesza2010structured} in determinantal point processes to any valid kernel, and we enable the interpolation from only quality-focused recommendation~\footnote{\IE diversity described by the kernel is not considered.} to diversity-focused recommendation~\footnote{\IE we ignore quality scores and only aim at being diverse.}. The likelihood matrix for this family of DPPs for a given user $\USER{} \in \bbR^\DU$, their history $\HISTORY{\USER{}}{}$, and item subset $\BATCH{} \subseteq \Omega$, is:
\begin{eqnarray}\label{eq:extension_qdd}
    \USERKERNEL{\lambda}{f}{\BATCH{}}{\USER{}} := (\MREWARD{\USER{},\BATCH{}})^{2\lambda} f(\KERNEL, \BATCH{}, \HISTORY{\USER{}}{})^{2(1-\lambda)} (\MREWARD{\USER{},\BATCH{}})^{2\lambda} \:,
\end{eqnarray}
where $\lambda \in [0,1]$ controls the trade-off between quality ($\lambda=1$) and diversity ($\lambda=0$). The presence of $\lambda$ allows us to implement continuously and explicity the trade-off between relevance and diversity. $\MREWARD{\USER{},S}$ is a diagonal matrix containing the expected rewards $\{\FREWARD(\ITEM{i}, \USER{})\}$ for $i \in \BATCH{}$. $f$ is a function with values in the set of positive-definite matrices in $\bbR^{\NBATCH \times \NBATCH}$, and depends on kernels computed on $S$, $\Omega$, $\HISTORY{\USER{}}{}$. $f$ should incorporate all information about the desired diversity in recommended items. Note that we completely disentangle quality and diversity in this definition. In our experiments, we consider a linear kernel function. 

Equation~\ref{eq:extension_qdd} offers a flexible definition of the quality-diversity trade-off, which recovers many well-known DPPs 
depending on the definition of 
$f$ and $\lambda$ (see Appendix~\ref{app:related_works}). We aim at maximizing the log-determinant of $\USERKERNEL{\lambda}{f}{\BATCH{}}{\USER{}}$ as $\SETSCORE{}^\lambda_{f,\FREWARD}(\BATCH{};\USER{}) := \log \det \USERKERNEL{\lambda}{f}{\BATCH{}}{\USER{}}$, 
with the convention $\SETSCORE{}^\lambda_{f,\FREWARD}(\emptyset):=0$.~\footnote{Provided that the volume is well-defined--that is, $f(\KERNEL, \Omega, \emptyset)$ is positive-definite for any universe of items $\Omega$.} 
The larger the score, the better the subset $\BATCH{}$ for the quality-diversity trade-off. We could stop at this point, and simply consider a (possibly conditional) DPP to obtain diversified recommendations~\citep{affandi2012markov}. Still, as discussed in the introduction, this approach might be expensive for users with long histories. We suggest another more tractable approach named \ALGO{} below, which leverages the existence of kernel-associated feature maps. 

\begin{algorithm}[tb]
   \caption{Recommendation round at $t>0$ with \ALGO{} ($f=f^\text{BDR}$) \textcolor{purple}{and an adaptive Q-D trade-off}.}
    \label{alg:recom_setting}
\begin{algorithmic}
   \STATE {\bfseries Input:} $\USER{t}$: user context, $\HISTORY{\USER{t}}{t}$: user history, $\FREWARD$: feedback model, \textcolor{purple}{$\mathcal{A}$: online learner, $\lambda^{t}$: initial relevance weight}
   \STATE Initialize $\MREWARD{\USER{t},\Omega} \gets \text{diag}(\{\FREWARD(\ITEM{}, \USER{t})\}_{\ITEM{} \in \Omega})$ 
   \STATE \textcolor{gray}{\# Compute the likelihood matrix}
   \STATE $\USERKERNEL{\textcolor{purple}{\lambda^{t}}}{f}{\Omega}{\USER{t}} \gets (\MREWARD{\USER{t},\Omega})^{2\textcolor{purple}{\lambda^{t}}} f(\KERNEL, \Omega, \HISTORY{\USER{t}}{t})^{2(1-\textcolor{purple}{\lambda^{t}})} (\MREWARD{\USER{t},\Omega})^{2\textcolor{purple}{\lambda^{t}}}$ 
   \STATE \textcolor{gray}{\# Use \textsc{SAMPLING} or \textsc{MAXIMIZATION}}
   \STATE Sample $\BATCH{t}$ with $\mathbb{P}(\BATCH{t}) \propto \det \USERKERNEL{\textcolor{purple}{\lambda^{t}}}{f}{\BATCH{t}}{\USER{t}}$ or Compute $\BATCH{t} \in \arg\max_{\BATCH{} \subseteq \Omega} \det \USERKERNEL{\textcolor{purple}{\lambda^{t}}}{f}{\BATCH{}}{\USER{t}}$
   \STATE Receive $\bm{y}^t \leftarrow (y^t_1, y^t_2, \dots, y^t_\NBATCH)$ from model $\FREWARD$
  \STATE \textcolor{purple}{Update $\mathcal{A} \gets (1-2\lambda^t)\nabla_{\lambda} \SETSCORE{t}^{\lambda^{t}}_{f,\bm{y}^t}(\BATCH{t};\USER{t})$}
  \STATE \textcolor{purple}{Predict $\lambda^{t+1}$ from $\mathcal{A}$}
   \STATE {\bfseries Return} $\BATCH{t}$ (item batch)\textcolor{purple}{, $\lambda^{t+1}$ (updated weight)}
\end{algorithmic}
\end{algorithm}

\paragraph{The \ALGO{} (BDR) DPP.} \ALGO{} belongs to the DQD family, and is defined with  $\alpha \in [0,2]$. If $G^{\BATCH{}, \HISTORY{\USER{}}{}} \in \mathbb{R}^{\NBATCH \times \DI}$, and, for $i \leq \NBATCH$, $G^{\BATCH{}, \HISTORY{\USER{}}{}}_{i,\cdot} = \ITEM{\ell_i}$ if $\underset{j \in \HISTORY{\USER{}}{}} {\max}\cos(\nu(\ITEM{\ell_i}),\nu(\ITEM{j})) \geq 1-\alpha$ (that is, there exists an item in the history $1-\alpha$-close to the considered item with the cosine similarity) and otherwise $\bm{0}$, then
\begin{eqnarray}\label{eq:ALGO}
   f^\text{BDR}(\KERNEL, \BATCH{}, \HISTORY{\USER{}}{}) :=  XX^\intercal\text{ with }X := \nu(\BATCH{}) - \nu(G^{\BATCH{}, \HISTORY{\USER{}}{}})\;.
\end{eqnarray}
$\alpha$ controls in a more subtle way than the kernel $\KERNEL$ the degree of diversity expected compared to the user history, by filtering out items too similar to the history. We discuss the choice of hyperparameter $\alpha$ in Section~\ref{sec:experiments}. See Algorithm~\ref{alg:recom_setting} for a pseudo-code of \ALGO{} with a fixed value of $\lambda \in [0,1]$. \citet{gartrell2017low,dupuy2018learning} also propose a family of low-rank factorizations of $L$-matrices. However, \citet{gartrell2017low} does not endorse a quality-diversity trade-off, whereas \citet{dupuy2018learning} requires two supplementary low-rank approximations. Conversely, \ALGO{} only resorts to 
a single Nystr\"{o}m approximation. 

\paragraph{Scalability of \ALGO.} Naively, the computation of the $L$-matrix in \ALGO{} is in $\mathcal{O}(\NITEMS^3)$. However, we leverage several approximations and methods to achieve a time complexity linear in $\NITEMS$ that we describe more thoroughly in Appendix~\ref{app:tractable}. In practice, we learn function $\nu$ with a Nystr\"{o}m approximation~\citep{nystrom1930praktische,yang2012nystrom,liu2021random} of low rank $\DI'$ to avoid computations on potentially large matrices of size $\NITEMS \times \NITEMS$. This approximation can be applied to any kernel function $\KERNEL$. However, one should be careful to select $\DI'$ such that $\NITEMS \gg \DI' \geq \NBATCH + \max_{t \leq \HORIZON} |\HISTORY{\USER{t}}{t}|$ to ensure that the interbatch/global volume of any sampled batch of $\NBATCH$ items can be larger than $0$ (see Equation~\ref{eq:def_metrics}). 

The Nystr\"{o}m approximation of rank $\DI'$ on a random selection of representative points, run once, has a time complexity of $\mathcal{O}(\NITEMS (\DI')^2 + (\DI')^3)$~\citep{williams2000using}. The computation of $G^{\BATCH{}, \HISTORY{\USER{}}{}}$ for any user $\USER{}$ of history $\HISTORY{\USER{}}{}$ and subset $\BATCH{}$ using a $k$-d tree leads to an average time complexity in $\Omega(\NBATCH \log |\HISTORY{\USER{}}{}|)$, where $|\HISTORY{\USER{}}{}| \ll \NITEMS$ as a general rule, and $\Omega(\NBATCH |\HISTORY{\USER{}}{}|)$ in the worst case of unbalanced trees~\citep{arya1993approximate}. Retrieving the closest neighbor can also be computed on large sets of (feature maps of) items by considering an approximate nearest neighbor algorithm, \EG FAISS~\citep{douze2024faiss} or LSH~\citep{dasgupta2011fast}. The computation of the matrix power in $\lambda$ can also be in linear time in $\NITEMS$. Using the $\alpha$-DPP sampling procedure~\citep{calandriello2020sampling} for the \textsc{SAMPLING} strategy and the greedy algorithm in~\citet{chen2018fast} for the \textsc{MAXIMIZATION} approach, with time complexities linear in $\NITEMS$ (see Appendix~\ref{app:related_works}), we confirm that \ALGO{} remains tractable even when facing millions of items. 

\section{User-adaptive quality-diversity trade-off}\label{sec:adaptive_lambda} 

A frequent problem when dealing with the quality-diversity trade-off is to handle the fact that some users might be more receptive to diverse recommendations than others. 
We wish for an automated procedure for tuning the trade-off which takes into account the user's prior reactions to diversified items. In practice, this boils down to changing the value of hyperparameter $\lambda$. Note that this setting goes beyond regular hyperparameter tuning. First, this occurs in a potentially adversarial context, where the user might switch behavior across recommendation rounds (conservative search versus openness to change), whereas hyperparameter tuning expects some stationarity. Second, this tuning should occur online, and not on an offline training data set, which is often not accessible for new users.

We can frame this problem as an online learning game between a player--that is, the recommender system--and Nature--which is the interaction with \textit{one} user $(\USER{}, \HISTORY{\USER{}}{t})$ querying the recommender system at time $t$. 
At round $t \leq \HORIZON$ of the game, the recommender system chooses a value of $\lambda^t \in [0,1]$. Then, the recommender system makes a recommendation $\BATCH{t}$ to user $\USER{}$ and receives a loss 
value related to the quality-diversity trade-off achieved, depending on the vector of feedback $\bm{y}^t$ from Nature. The recommender system should use this loss to update and use $\lambda^{t+1} \in [0,1]$ at the next round. The game ends after $\HORIZON$ interactions with the same user. The quality-diversity trade-off gain is given by the function $\lambda, \BATCH{}, \bm{y} \mapsto \SETSCORE{t}^\lambda_{f,\bm{y}}(\BATCH{};\USER{})$ for all $\lambda \in [0,1]$, $\BATCH{} \subseteq \Omega$, $|\BATCH{}|=\NBATCH$, and $\bm{y} \in (\bbR_+^{*})^\NBATCH$.~\footnote{Analoguously to a feedback model, $\bm{y}$ is such that $\bm{y}(\ITEM{}, \USER{})$ is the (observed) feedback from the user $\USER{}$ if $\ITEM{}$ has been recommended, and otherwise is equal to $1$ (meaning that it is ignored for a non-visited $\ITEM{}$).} The goal of the game, by selecting the $\lambda_t$'s, is to maximize the cumulated quality-diversity trade-off, and alternatively, to minimize the cumulative regret compared to a deterministic oracle knowing the batch-feedback pairs $(\BATCH{t},\bm{y}^t)_{t \leq \HORIZON}$ in advance
\begin{equation}\label{eq:regret_adaptive}
   \mathcal{R}(\HORIZON) :=  \max_{\lambda \in [0,1]} \sum_{t \leq \HORIZON} \SETSCORE{t}^\lambda_{f,\bm{y}^t}(\BATCH{t};\USER{}) -  \SETSCORE{t}^{\lambda^t}_{f,\bm{y}^t}(\BATCH{t};\USER{})\:.
\end{equation}

Many online learners have been introduced to solve this type of problem, \EG EXP3~\citep{auer2002nonstochastic} and AdaHedge~\citep{de2014follow}. We select AdaHedge, as it is an online learner which is horizon $\HORIZON$-agnostic, and does not require to know the scale of the loss function in advance. We further go into details as regards the implementation of this procedure in Appendix~\ref{subapp:adaptive_implementation}. Algorithm~\ref{alg:recom_setting} (purple lines) shows how we modify the initial recommendation algorithm with \ALGO{} to adaptively choose $\lambda$.   We derive guarantees on the regret incurred by this procedure.
\begin{theorem}{\textnormal{Upper bound on the regret incurred by the adaptive diversity tuning procedure.}}\label{th:regret_UB} An upper bound on the regret $\mathcal{R}(\HORIZON)$ incurred by the adaptive strategy for tuning the level of diversity $\lambda \in [0,1]$ for user $\USER{}$ over $\HORIZON$ rounds of recommendations is 
\begin{eqnarray*}
& & \mathcal{R}(\HORIZON) \leq 2\delta_T\sqrt{T\log(2)}+16\delta_T(2+\log(2)/3)\:,\\
& & \text{ where } \delta_T := 8 \max_{t \leq T} \log \frac{(\max_{i \leq B} y^t_i)^B}{\VOLUME{f(k,S^t,H^t)}}\;.
\end{eqnarray*}
\end{theorem}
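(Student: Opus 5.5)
The plan is to reduce the problem to the standard AdaHedge guarantee over two experts and bound the loss range by $\delta_T$. The key structural fact is that for fixed $(\BATCH{t},\bm{y}^t)$ the gain $\lambda\mapsto \SETSCORE{t}^\lambda_{f,\bm{y}^t}(\BATCH{t};\USER{})$ is affine in $\lambda$. Indeed, $Q:=\MREWARD{\USER{},\BATCH{t}}$ is diagonal with positive entries by Assumption~\ref{as:feedback_model}, and $F:=f(\KERNEL,\BATCH{t},\HISTORY{\USER{}}{t})$ is positive definite. Hence
\begin{equation*}
\log\det\big(Q^{2\lambda}F^{2(1-\lambda)}Q^{2\lambda}\big)=4\lambda\log\det Q+2(1-\lambda)\log\det F .
\end{equation*}
Writing $a_t:=4\sum_{i\le \NBATCH}\log y^t_i$ and $b_t:=2\log\det F$, the gain is $g_t(\lambda)=b_t+\lambda(a_t-b_t)$. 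Consequently, the maximum over $\lambda\in[0,1]$ in \eqref{eq:regret_adaptive} of the cumulative sum $\sum_t g_t(\lambda)$ is attained at an endpoint $\lambda\in\{0,1\}$. The comparator is therefore the best of two experts: ``pure diversity'' ($\lambda=0$) and ``pure quality'' ($\lambda=1$).

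Next, I interpret the learner's choice $\lambda^t$ as the weight AdaHedge puts on expert $\lambda=1$. By linearity, $g_t(\lambda^t)=\lambda^t a_t+(1-\lambda^t)b_t$ is exactly the mixture gain of the two-expert Hedge game. Converting gains to losses $\ell_t=(-b_t,-a_t)$, the regret $\mathcal{R}(\HORIZON)$ equals the AdaHedge regret against the best expert. I then invoke the AdaHedge bound of \citet{de2014follow} for $K=2$ experts with loss range $\Delta$ (the max over rounds of $\max_k\ell_{t,k}-\min_k\ell_{t,k}$):
\begin{equation*}
\mathcal{R}(\HORIZON)\le 2\Delta\sqrt{\HORIZON\log 2}+\Delta\big(\tfrac{4}{3}\log 2+2\big),
\end{equation*}
or an equivalent form. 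The purple update in Algorithm~\ref{alg:recom_setting}, which feeds $\nabla_\lambda g_t=a_t-b_t$, is consistent with this, since only the loss difference matters for two experts. The constants in the statement, $16\delta_T(2+\log 2/3)$, should come out of this bound after substituting $\Delta\le\delta_T$ with some slack; I would not try to match them tightly, only to show the stated bound dominates the AdaHedge one.

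The remaining step, which I expect to be the main obstacle, is bounding the range $|a_t-b_t|$ by $\delta_T$. We have $a_t-b_t=4\log\prod_i y^t_i-2\log\det F$. Using $\prod_i y^t_i\le(\max_i y^t_i)^{\NBATCH}$ and $\det F=\VOLUME{f}^2$ (read with the paper's notation $\VOLUME{f(k,S^t,H^t)}$ as the square root of $\det F$), this gives $a_t-b_t\le 4\log\big((\max_i y^t_i)^{\NBATCH}/\VOLUME{F}\big)$, up to the factor $2$ between $\log\det$ and $\log\VOLUME{}$.

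The lower side $b_t-a_t$ needs care. Under Assumption~\ref{as:bounded_norm} and Assumption~\ref{as:kernel}, $F=XX^\intercal$ with rows of $X$ of norm at most $2\sqrt{\max k}$, so $\det F$ is bounded above by Hadamard's inequality; with normalized kernels this is a constant. I would have to assume, or argue from the positivity and normalization in the setting, that $\log\prod_i y^t_i$ is not too negative relative to this, so that $|a_t-b_t|$ is controlled by the same logarithmic quantity. This is presumably why the generous factor $8$ appears in $\delta_T$.

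Taking the maximum over $t\le\HORIZON$ gives $\Delta\le\delta_T$. Plugging this into the AdaHedge bound concludes the proof.
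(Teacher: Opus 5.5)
Your reduction follows the paper's route: affineness of $\lambda\mapsto\mathrm{scr}^\lambda$, two-expert AdaHedge, then a bound on the loss range. Your version is slightly cleaner. It gives an exact identity between $\mathcal{R}(T)$ and the Hedge regret, with per-round range $|C_t|$, where $C_t:=4\sum_i\log y_i^t-4\log\mathrm{vol}(F_t)$ is your $a_t-b_t$. The paper instead passes through the loss vector $[C_t,-C_t]$, whose range is $2|C_t|$. Also, $2\log\det F_t=4\log\mathrm{vol}(F_t)$ holds exactly, so your upper bound $C_t\le 4\log\big((\max_i y_i^t)^B/\mathrm{vol}(F_t)\big)$ needs no factor-2 caveat.

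\textbf{The gap.} The genuine gap is the step you flagged: bounding $-C_t$ by $\delta_T$. The factor $8$ cannot rescue it, because under Assumptions~\ref{as:kernel}--\ref{as:bounded_norm} alone this step, and the statement itself, are false. Take $B=1$, the linear kernel, $\alpha=0$, and recommended items never collinear with the history, so that $F_t=[1]$ and $\mathrm{vol}(F_t)=1$. Take also the constant feedback model $q_\Theta\equiv 1/2$. Then $C_t=-4\log 2$, so $\delta_T=8\log(1/2)<0$ and the claimed right-hand side is negative. Yet the best fixed $\lambda$ is $0$, and $\mathcal{R}(T)=4\log 2\sum_t\lambda^t\ge 2\log 2>0$, since AdaHedge starts at $\lambda^1=1/2$. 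For $B\ge 2$ one can even keep $\delta_T>0$ fixed and make $-C_t$ arbitrarily large by sending $\min_i y_i^t\to 0$.

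\textbf{The paper's proof.} It has the same hole, hidden in its last line. It starts from $4\log(m_t^B/\mathrm{vol}(F_t))\le C_t\le 4\log(M_t^B/\mathrm{vol}(F_t))$, with $m_t\le y_i^t\le M_t$, $M_t\ge 1$ and $\mathrm{vol}(F_t)\le 1$. It then concludes $\max(C_t,-C_t)\le 4\log\big(\max(m_t,M_t)^B/\mathrm{vol}(F_t)\big)$. That conclusion controls $-C_t$ only when $(m_tM_t)^B\ge\mathrm{vol}(F_t)^2$, which its hypotheses do not imply.

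\textbf{How to repair your argument.} You need an extra hypothesis forcing $C_t\ge 0$.
\begin{itemize}
\item The natural one is $y_i^t\ge 1$ for all $i,t$. This holds in the paper's noisy-feedback experiments, with observed values in $\{1,2\}$, $\{1,2,3\}$ or $\{1,\dots,6\}$, and is consistent with the convention $y=1$ for unobserved items.
\item You also need $\mathrm{vol}(F_t)\le 1$. Your row bound $2\sqrt{\max k}$ is not enough, since it only yields $\det F_t\le(4\max k)^B$. Use, e.g., the linear kernel: positive-definiteness of $F_t$ then forces every row of $X$ to be a unit item embedding, and Hadamard's inequality gives $\det F_t\le 1$.
\end{itemize}
Under these hypotheses, $0\le C_t\le 4\log\big((\max_i y_i^t)^B/\mathrm{vol}(F_t)\big)\le\delta_T/2$. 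The rest of your proof then goes through with room to spare, since the stated constants dominate AdaHedge's.

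Alternatively, keep Assumptions~\ref{as:kernel}--\ref{as:bounded_norm} and redefine
$\delta_T:=8\max_t\max\{\log(M_t^B/\mathrm{vol}(F_t)),\log(\mathrm{vol}(F_t)/m_t^B)\}$, with $m_t=\min_i y_i^t$ and $M_t=\max_i y_i^t$. This gives $|C_t|\le\delta_T/2$ directly.
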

The upper bound has a time dependence in $\mathcal{O}(\sqrt{\HORIZON})$, which is on par with the state-of-the-art for online learners.
\begin{proof}
The full proof is shown in Appendix~\ref{subapp:adaptive_bounds}, and we describe here a sketch. We show that the loss function used with AdaHedge at time $t$ in Line 8 of Algorithm~\ref{alg:recom_setting} is proportional to $\lambda$. Thus we show that the loss function is linear (and then, concave) in $\lambda$. Applying successively the gradient trick, Theorem 8 and Corollary 17 from~\cite{de2014follow} to the convex function $-g_t$ and using standard upper bounds leads to the regret upper bound.
\end{proof}

\section{Experimental study}\label{sec:experiments}

To try to get as close as possible to a realistic online setting, we consider the following situation. At recommendation time $t>0$, for a new user $\USER{t}$ with ground-truth history $\HISTORYGT{\USER{t}}{t}$ from the initial data set, a recommender system will output a batch of recommendations for each round $t,t+1,\dots,t+|\HISTORYGT{\USER{t}}{t}|$ with respective user context-history pairs $(\USER{t}, \emptyset), (\USER{t}, \{ i_1 \}), (\USER{t}, \{ i_1, i_2 \}), \dots, (\USER{t}, \HISTORYGT{\USER{t}}{t})$, where $\HISTORYGT{\USER{t}}{t} := \{ i_1, i_2, \dots, i_{M} \}$ if $|\HISTORY{\USER{t}}{t}|=M$ in the initial data set. It means that at each round, we incrementally increase the user history with true previously recommended items. 
We iterate this process over $10$ random seeds, and average all the metrics across those 10 iterations for each user. For pointwise metrics such as in Equation~\ref{eq:def_metrics}, we also average these metrics along the trajectory of recommendations of length $|\HISTORYGT{\USER{t}}{t}|$. 
Finally, since we run this setting on several users, we aggregate these metrics across users by considering the average and the standard deviation. 
The (execution) time is the time in seconds needed to output a single batch of recommendations to a user. All values are rounded to the closest $2^\text{nd}$ decimal place.

We consider the following data sets in our experiments: (1) synthetic sets: 
their names are prefixed with \textsc{SYNTHETIC}, followed by the number of items $\NITEMS$, with feedback values in $[0,1]$. 
; (2) real-life data sets: MovieLens data set~\citep{harper2015movielens} for movie recommendation; 
Epinions data set~\citep{leskovec2010signed} in social networks, both with feedback values in $[0,5]$, and a collection of six data sets for drug repurposing (Cdataset, DNdataset, Fdataset, Gottlieb, LRSSL, and PREDICT) with feedback values in $[0,1]$. 
As previously mentioned, we use different feedback models for all data sets to illustrate the generality of our results. See respectively Appendices~\ref{app:synthetic}-\ref{app:real_life} for the synthetic and real-life data sets.

\textbf{Hyperparameter sensitivity analysis.} We vary values of hyperparameters $\lambda \in [0,1]$ $\alpha \in [0,2]$ (only for \ALGO{}). The trends for relevance and diversity depending on $\lambda$ and $\alpha$ are shown respectively in Figure~\ref{fig:SYNTHETIC750_lambda_alpha} 
and Table~\ref{tab:SYNTHETIC750_lambda_alpha}. 
As expected, across all recommender systems, as $\lambda$ increases, relevance increases, whereas (local or global) diversity globally decreases. 
Visually, the top two recommender systems are our contribution \ALGO{} and MMR. Computing explicitly the areas under the curve for each metric (rounding to the closest third decimal place) shows that \ALGO{} outperforms MMR on all metrics except for the global diversity. 
Regarding $\alpha$ in \ALGO, unsurprisingly, the higher $\alpha$, the more (globally) diverse the recommendations, to the price of a loss in relevance and precision. Beyond a certain value of $\alpha$ (for this data set, $\alpha=1$), the DPP can no longer sample enough diverse items, as $\alpha$ filters out most of the items. We discuss this issue and potential solutions in Appendix~\ref{app:tractable}. In the remainder of the section, we set $\lambda=0.5$ and $\alpha=0$.

\begin{figure}
   \includegraphics[width=0.49\textwidth]{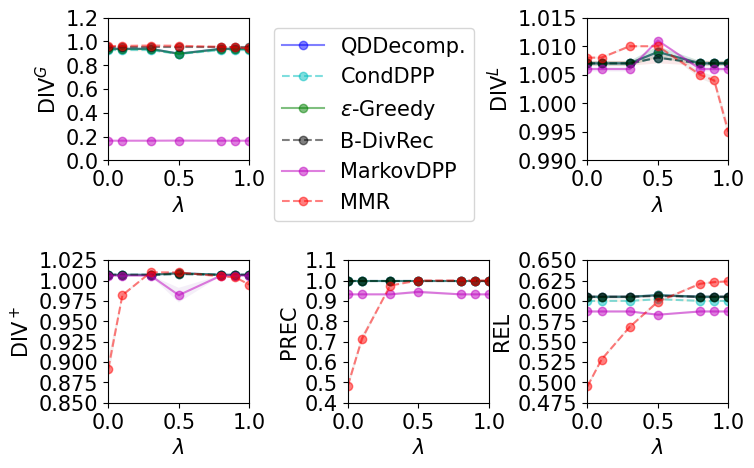}%
 \caption{Sensitivity analysis for $\lambda$ on SYNTHETIC750 (6 users, $\NBATCH=3$, $\tau=0.5$), with the \textsc{MAXIMIZATION} strategy.}%
 \label{fig:SYNTHETIC750_lambda_alpha}
 \end{figure}

 \begin{table}[t]
  \caption{Sensitivity analysis for $\alpha$ on SYNTHETIC750 (6 users, $\NBATCH=3$, $\tau=0.5$), \ALGO{} with the \textsc{MAXIMIZATION} strategy.}
  \label{tab:SYNTHETIC750_lambda_alpha}
  \begin{center}
    \begin{small}
      \begin{sc}
        \begin{tabular}{lcccr}
          \toprule
         $\alpha$  & rel $\uparrow$         & prec $\uparrow$      & div$^\text{G}$ $\uparrow$  \\
          \midrule
        0 &0.61$\pm$ 0.0&1.0$\pm$ 0.0&0.95$\pm$ 0.0\\
        0.1&0.61$\pm$ 0.0&1.0$\pm$ 0.0&0.95$\pm$ 0.0 \\
        0.5&0.61$\pm$ 0.0&1.0$\pm$ 0.0&0.95$\pm$ 0.0 \\
        0.8&0.61$\pm$ 0.0&1.0$\pm$ 0.0&0.96$\pm$ 0.0\\
        0.90&0.6$\pm$ 0.0&1.0$\pm$ 0.0&0.97$\pm$ 0.0\\
        0.95&0.58$\pm$ 0.0&1.0$\pm$ 0.0&0.97$\pm$ 0.0\\
        0.98&0.57$\pm$ 0.0&0.97$\pm$ 0.0&0.97$\pm$ 0.0 \\
        1 &0.56$\pm$ 0.0&0.92$\pm$ 0.01&0.96$\pm$ 0.0\\
          \bottomrule
        \end{tabular}
      \end{sc}
    \end{small}
  \end{center}
  \vskip -0.1in
\end{table}


\paragraph{Testing the scalability of \ALGO{}.} 
We run an experiment with \ALGO{} and its strongest contender MMR on SYNTHETIC1500, with varying batch sizes. We report the average execution time in seconds over 10 iterations in Table~\ref{tab:vary_batch_sizes}. Albeit MMR is slightly faster for smaller values of batches, it becomes quickly much slower as the batch size $\NBATCH$ increases, and up to $12\times$ slower when $\NBATCH=900$.

 \begin{table}[t]
  \caption{Execution time benchmark (runtime in seconds) on SYNTHETIC1500 (4 users, $\tau=0.5$) with \textsc{MAXIMIZATION}.}
  \label{tab:vary_batch_sizes}
  \begin{center}
    \begin{small}
      \begin{sc}
        \begin{tabular}{lccc}
          \toprule
         Batch size $\NBATCH$  & MMR         & \textbf{\ALGO{}}   \\
          \midrule
        5 &\textbf{0.03$\pm$ 0.0}&0.07$\pm$ 0.0\\
        50&0.19$\pm$ 0.0&\textbf{0.08$\pm$ 0.0} \\
        500&2.05$\pm$ 0.01&\textbf{0.21$\pm$ 0.0} \\
        900&4.16$\pm$ 0.13&\textbf{0.32$\pm$ 0.0}\\
          \bottomrule
        \end{tabular}
      \end{sc}
    \end{small}
  \end{center}
  \vskip -0.1in
\end{table}

\paragraph{Noiseless feedback setting.} First, we compared the \textsc{SAMPLING} and \textsc{MAXIMIZATION} strategies in Appendix~\ref{app:compl_experiments} on SYNTHETIC\{750,30k,3M,15M\}, the MovieLens and the PREDICT data sets. See Table~\ref{tab:SYNTHETIC750} for the results on SYNTHETIC750. We remarked that MarkovDPP, which is the baseline which is closest to our setting, could not be run on instances with more than $5,000$ items. Moreover, we consistently found across data sets that the \textsc{MAXIMIZATION} strategy achieved a better quality-diversity tradeoff than the \textsc{SAMPLING} strategy. This motivated us to only use this strategy for all DPP-based approaches in the remainder of the paper.  
 \begin{table}[t]
  \caption{Benchmark on SYNTHETIC750 (6 users, $\NBATCH=3$).}
  \label{tab:SYNTHETIC750}
  \begin{center}
    \begin{small}
      \begin{sc}
        \begin{tabular}{lccc}
          \toprule
        \textsc{SAMPLING} & \text{rel} $\uparrow$ & \text{div}$^G$  $\uparrow$ & \text{div}$^+$ $\uparrow$   \\
          \midrule
        QDDecomp.&0.51$\pm$ 0.0&\underline{0.95$\pm$ 0.00}&\textbf{1.01$\pm$ 0.0}\\ 
        CondDPP&0.51$\pm$ 0.0&0.94$\pm$ 0.01&0.97$\pm$ 0.01 \\ 
        $\varepsilon$-Greedy&0.51$\pm$ 0.0&\underline{0.95$\pm$ 0.00}&\textbf{1.01$\pm$ 0.0} \\ 
        MarkovDPP&0.51$\pm$ 0.0&0.17$\pm$ 0.00&0.91$\pm$ 0.01 \\ 
        \textbf{\ALGO}&0.51$\pm$ 0.0&\underline{0.95$\pm$ 0.00}&\underline{0.98$\pm$ 0.01}\\ 
        \midrule
        \midrule
        \textsc{MAXIMIZ.} & &  &   \\
     \midrule
    QDDecomp.&\textbf{0.61$\pm$ 0.0}&0.90$\pm$ 0.01&\textbf{1.01$\pm$ 0.0}\\ 
    CondDPP&\underline{0.60$\pm$ 0.0}&0.89$\pm$ 0.01&\textbf{1.01$\pm$ 0.0}\\ 
    $\varepsilon$-Greedy&\textbf{0.61$\pm$ 0.0}&0.90$\pm$ 0.01&\textbf{1.01$\pm$ 0.0}\\ 
    MarkovDPP&0.58$\pm$ 0.0&0.17$\pm$ 0.0&\underline{0.98$\pm$ 0.01}\\ 
    \textbf{\ALGO}&\textbf{0.61$\pm$ 0.0}&\underline{0.95$\pm$ 0.0}&\textbf{1.01$\pm$ 0.0}\\ 
        \midrule
        \midrule
        MMR&\underline{0.60$\pm$ 0.0}&\textbf{0.96$\pm$ 0.0}&\textbf{1.01$\pm$ 0.0}\\ 
          \bottomrule
        \end{tabular}
      \end{sc}
    \end{small}
  \end{center}
  \vskip -0.1in
\end{table}
We now focus our study on known baselines Deep DPP~\citep{gartrell2018deep} (where the likelihood matrix is learned by a neural network), MMR~\citep{carbonell1998use} and xQuAD~\citep{santos2010exploiting} (an intent-aware model) instead of DQD variants which are tested in Appendix~\ref{app:compl_experiments}.
Full results on all metrics are located in Appendix~\ref{app:benchmark}.
On the MovieLens data set for movie recommendation, MMR is the top contender, whereas our contribution \ALGO{} is the second best, improving significantly upon all other baselines diversity-wise. The $\text{div}^\text{G}$ values are particularly small on MovieLens due to the fact that initial user histories are collinear: they feature similar movie embeddings, which leads to a small volume, due to having almost the same metadata in the data set. The same phenomenon can be observed on the Epinions data set. We discuss this issue further with a quantitative analysis in Appendix~\ref{app:discussion_movielens}. However, on all other data sets, \ALGO{} achieves the quality-diversity tradeoff, with relevance values close to the quality-wise top performers MMR and xQuAD, while largely improving on the global diversity metric. This is confirmed by the fact that \ALGO{} has the best value of effective diversity $\text{div}^{+}$ across all eight data sets but one. 

\begin{table}[tb!]
  \caption{Benchmark on real-life data sets: Cdataset (4 users, $\NBATCH=3$), DNdataset (4 users, $\NBATCH=3$), Epinions (3 users, $\NBATCH=3$: DeepDPP could not be run on Epinions, as the number of items is too large to compute the likelihood function in memory), Fdataset (4 users, $\NBATCH=3$), LRSSL (4 users, $\NBATCH=3$), MovieLens (4 users, $\NBATCH=3$) and PREDICT (4 users, $\NBATCH=3$).}
  \label{tab:benchmark}
  \begin{center}
    \begin{small}
      \begin{sc}
        \begin{tabular}{lcccrr}
          \toprule
           Cdataset         & \text{rel}  $\uparrow$  & \text{div}$^G$ $\uparrow$ &  \text{div}$^+$ $\uparrow$ \\
          \midrule
             DeepDPP & 0.36$\pm$ 0.03 &\underline{0.16$\pm$ 0.02}  & 0.32$\pm$ 0.04\\
             MMR&   \textbf{0.91$\pm$ 0.01}  &0.14$\pm$ 0.01 &0.36$\pm$ 0.0 \\
             xQuAD & \textbf{0.91$\pm$ 0.01} &0.14$\pm$ 0.01 & \underline{0.37$\pm$ 0.01}\\
             \textbf{\ALGO}& \underline{0.79$\pm$ 0.01} & \textbf{0.22$\pm$ 0.02} &\textbf{0.52$\pm$ 0.02}\\
             \midrule
             DNdataset         &  & &   \\
             \midrule
             DeepDPP & 0.23$\pm$ 0.06 &0.72$\pm$ 0.01 &\underline{0.23$\pm$ 0.06}  \\
             MMR & \textbf{0.33$\pm$ 0.06} &\underline{0.86$\pm$ 0.03} &0.19$\pm$ 0.05  \\
             xQuAD & \textbf{0.33$\pm$ 0.06} &0.77$\pm$ 0.01&0.22$\pm$ 0.06\\
             \textbf{\ALGO} & \underline{0.31$\pm$ 0.06}  & \textbf{0.98$\pm$ 0.01} &\textbf{0.25$\pm$ 0.07} \\
             \midrule
             Epinions         &  & &   \\
             \midrule
            Deep DPP & -- & -- & --\\
            MMR&     \textbf{0.04$\pm$ 0.01}   &0.0$\pm$ 0.0  &0.0$\pm$ 0.0 \\
            xQuAD & \textbf{0.04$\pm$ 0.01}  &0.0$\pm$ 0.0 & 0.0$\pm$ 0.0\\
            \textbf{\ALGO}& \underline{0.02$\pm$ 0.01} & 0.0$\pm$ 0.0 & 0.0$\pm$ 0.0\\
             \midrule
             Fdataset         &  & &   \\
             \midrule
          Deep DPP &  0.48$\pm$ 0.03  & 0.12$\pm$ 0.01  & \underline{0.59$\pm$ 0.06} \\
            MMR&  \textbf{0.95$\pm$ 0.01 } &\underline{0.16$\pm$ 0.01 }&0.35$\pm$ 0.01  \\
           xQuAD &   \underline{0.93$\pm$ 0.01} &0.14$\pm$ 0.01&0.37$\pm$ 0.01\\
          \textbf{\ALGO}&   0.81$\pm$ 0.01  & \textbf{0.24$\pm$ 0.02} &\textbf{0.64$\pm$ 0.05} \\
             \midrule
             Gottlieb         &  & &   \\
             \midrule
            Deep DPP & 0.50$\pm$ 0.03 & 0.19$\pm$ 0.02 & 0.55$\pm$ 0.05 \\
            MMR&   \textbf{0.94$\pm$ 0.01} &\underline{0.37$\pm$ 0.03}&0.64$\pm$ 0.02 \\
            xQuAD & \underline{0.93$\pm$ 0.01} &0.36$\pm$ 0.03&\underline{0.65$\pm$ 0.01} \\
            \textbf{\ALGO}&0.82$\pm$ 0.01 & \textbf{0.40$\pm$ 0.02} & \textbf{0.78$\pm$ 0.01 }\\
             \midrule
             LRSSL         &  & &   \\
             \midrule
          Deep DPP &  0.39$\pm$ 0.02  & 0.53$\pm$ 0.05 & 0.64$\pm$ 0.03 \\
          MMR &  \textbf{0.97$\pm$ 0.0} &\underline{0.55$\pm$ 0.05} &\underline{0.82$\pm$ 0.02}\\
         xQuAD &   \textbf{0.97$\pm$ 0.0} &0.53$\pm$ 0.04&0.80$\pm$ 0.02\\
          \textbf{\ALGO}&  \underline{0.83$\pm$ 0.01}  & \textbf{0.56$\pm$ 0.04} & \textbf{0.86$\pm$ 0.01} \\
             \midrule
             MovieLens         &  & &   \\
             \midrule
            Deep DPP & 1.25$\pm$ 0.03 & 0.02$\pm$ 0.00 &  0.22$\pm$ 0.02 \\
            MMR&     \underline{  3.78$\pm$ 0.08} & \textbf{0.06$\pm$ 0.01} & \textbf{0.77$\pm$ 0.03}\\
            xQuAD &  \textbf{ 3.85$\pm$ 0.06} & \underline{0.05$\pm$ 0.00} & \underline{0.68$\pm$ 0.02}\\
            \textbf{\ALGO}& 3.49$\pm$ 0.09  & \textbf{0.06$\pm$ 0.01} &  0.60$\pm$ 0.03 \\
             \midrule
            PREDICT        &  & &   \\
             \midrule
           DeepDPP &  0.21$\pm$ 0.02&0.35$\pm$ 0.03&0.30$\pm$ 0.04\\
           MMR & 0.66$\pm$ 0.02  & \textbf{0.60$\pm$ 0.02} & \underline{0.70$\pm$ 0.02} \\ 
           xQuAD & \textbf{0.79$\pm$ 0.02}  & 0.47$\pm$ 0.03 & 0.64$\pm$ 0.02 \\
          \textbf{\ALGO}&  \underline{0.76$\pm$ 0.02}&\underline{0.57$\pm$ 0.02}&\textbf{1.02$\pm$ 0.02}\\ 
          \bottomrule
        \end{tabular}
      \end{sc}
    \end{small}
  \end{center}
  \vskip -0.1in
\end{table}

\paragraph{Adaptive quality-diversity trade-off.} Finally, we evaluate our adaptive procedure for tuning the quality-diversity trade-off parameter $\lambda \in [0,1]$. Note that $\lambda$ is specific to one user: each online learner is initialized and updated along a trajectory corresponding to a single user. As such, we only consider one user, user with identifier 0, in all data sets. Then, we run \ALGO{} (Equation~\ref{eq:ALGO}) combined with the \textsc{MAXIMIZATION} strategy and the adaptive approach described in Section~\ref{sec:adaptive_lambda}. We report the results in Table~\ref{tab:adaptive} for SYNTHETIC750, 
MovieLens and PREDICT. To assess the goodness of our approach, the final tuned value $\lambda_\text{final}$ is compared to the best \textit{a posteriori} relevance weight $\lambda_\star$. This value is computed at the end of a trajectory associated with one user by solving the maximization problem in $\lambda$ in Equation~\ref{eq:regret_adaptive}, and corresponds to the best deterministic action that the player could have taken in the game with Nature. 

We first focus on the noiseless setting. First, the adaptive procedure seems to be able to roughly retrieve the oracle value $\lambda_\star$ across data sets--albeit it is unlikely that it can always find it, since the oracle value relies on the \textit{a posteriori} knowledge of the user feedback. Second, the computational cost of using the adaptive procedure instead of a fixed value of $\lambda=0.5$ is moderate: the runtime with the adaptive procedure is multiplied by $8$ in average across all data sets, due to the approximation of matrix power (see Appendix~\ref{app:tractable}). Third, on both 
MovieLens and PREDICT
, adaptively selecting $\lambda$ throughout the trajectory allows us to noticeably increase relevance while trading off some diversity. 
MovieLens, 
since the users seem to be more biased towards popular movies compared to novel recommendations--as illustrated by the low value of $\text{div}^\text{G}$
--the recommender system is leaning toward making popular recommendations, leading to a higher $\lambda$, higher relevance ($\text{rel}$ $+2.5\%$) to the price of some of effective diversity ($\text{div}^{+}$ $-1.3\%$). 

\paragraph{Noisy feedback setting.} 
In the synthetic data set, the recommender system no longer receives the exact feedback score $\FREWARD(\ITEM{i},\USER{t})$ at time $t$, but a binary outcome in $\{1,2\}$ determined by a Bernouilli law of probability $p := \min(1, \max(0, \FREWARD(\ITEM{i}, \USER{t})))$. Note that 1 is drawn with probability $1-p$, and 2 with probability $p$. This setting simulates the case where we only receive clicks, and no longer probabilities of clicks. We also add noise to the drug repurposing task (PREDICT data set), where the recommender system has only access to the outcomes of clinical trials (3: successful, 2: not tested, 1: unsuccessful) instead of probabilities of success; and to the movie recommendation task, where observed feedback is the movie rating from 1 (not seen) to 6 (five stars) from the MovieLens data set. We need positive observed feedback to satisfy Assumption~\ref{as:feedback_model}. However, we renormalize ratings in Table~\ref{tab:adaptive} to enhance comparability to the noiseless setting.


As a general rule, considering noisy feedback expectedly decreases the relevance values, but this effect can be mitigated when using the adaptive procedure, which is another argument in favor of this approach in real-life settings. These experiments validate our findings even when we only have access to noisy observable feedback. 
The adaptive procedure allows us to improve on the relevance, when possible. When the relevance cannot be improved further, the relevance-wise and diversity-wise performance remains the same as in the non-adaptive setting.


\begin{table}[t]
  \caption{Benchmark with and without (``standard'') the adaptive tuning procedure in Section~\ref{sec:adaptive_lambda}, applied to \ALGO{} (Equation~\ref{eq:ALGO}) with the \textsc{MAXIMIZATION} strategy for user 0, starting with initial value $\lambda_0=0.5$. 
  We also test the adaptive procedure with a noisy feedback (``with noise''), where $y^t_k \neq \FREWARD(\ITEM{}, \USER{t})$ if $\ITEM{}$ is the k$^\text{th}$ recommended item. 
  We compare metrics in the adaptive setting to the standard setting: bold type denotes similar or improved performance.}
  \label{tab:adaptive}
  \begin{center}
    \begin{small}
      \begin{sc}
        \begin{tabular}{lrrr}
          \toprule
         Standard  & Synth750        & MovieLens      & PREDICT  \\
          \midrule
          \text{rel} $\uparrow$   & 0.61$\pm$ 0.0 & 4.38$\pm$ 0.0&  0.79$\pm$ 0.01\\
          \text{div}$^\text{G}$ $\uparrow$  & 0.96$\pm$ 0.0 & 0.01$\pm$ 0.0&  0.49$\pm$ 0.0 \\
          \text{div}$^+$ $\uparrow$   &  1.01$\pm$ 0.0 & 0.78$\pm$ 0.0 & 0.72$\pm$ 0.01 \\
          Time $\downarrow$    & 0.16$\pm$ 0.0  & 29.7$\pm$ 0.82& 0.82$\pm$ 0.0 \\
          \midrule
          Adaptive &   &    &   \\
          \midrule
          $\lambda_\text{final}$    & 0.18 & 1.00 &  0.35 \\
          $\lambda_\star$ & 0.10 &1.00 & 0.10 \\
          \text{rel} $\uparrow$   & \textbf{0.61$\pm$ 0.0}  & \textbf{4.49$\pm$ 0.01} & \textbf{0.79$\pm$ 0.01} \\
          \text{div}$^\text{G}$ $\uparrow$  & 0.95$\pm$ 0.0 & \textbf{0.01$\pm$ 0.0} & \textbf{0.49$\pm$ 0.0} \\
          \text{div}$^+$ $\uparrow$   & \textbf{1.01$\pm$ 0.0} &0.77$\pm$ 0.01 & \textbf{0.72$\pm$ 0.01}  \\
          Time $\downarrow$    & 0.5$\pm$ 0.0 & 158.1$\pm$ 0.84 &  1.97$\pm$ 0.05\\
          \midrule
          \midrule
          Standard &  with   &  noise  &   \\
          \midrule
          \text{rel} $\uparrow$   & 1.00$\pm$ 0.0 & 2.57$\pm$ 0.26 &  0.28$\pm$ 0.05 \\
          \text{div}$^\text{G}$ $\uparrow$  & 0.16$\pm$ 0.0 & 0.02$\pm$ 0.0 & 0.33$\pm$ 0.03  \\
          \text{div}$^+$ $\uparrow$   & 0.07$\pm$ 0.02 & 0.94$\pm$ 0.0 &  0.83$\pm$ 0.01\\
          Time $\downarrow$    &  0.06$\pm$ 0.0  &2.72$\pm$ 0.01  &1.36$\pm$ 0.08  \\
          \midrule
          Adaptive &  with   &  noise  &   \\
          \midrule
          $\lambda_\text{final}$    & 0.73  & 1.00&   0.97\\
          $\lambda_\star$ &  0.5 &1.00 & 0.99 \\
          \text{rel} $\uparrow$   & \textbf{1.00$\pm$ 0.0} & \textbf{5.00$\pm$ 0.0} &  \textbf{0.41$\pm$ 0.04}\\
          \text{div}$^\text{G}$ $\uparrow$  & \textbf{0.16$\pm$ 0.0} & \textbf{0.02$\pm$ 0.0} &  0.21$\pm$ 0.03 \\
          \text{div}$^+$ $\uparrow$   & \textbf{0.07$\pm$ 0.02} & \textbf{0.94$\pm$ 0.01} & 0.54$\pm$ 0.02 \\
          Time $\downarrow$    &   1.50$\pm$ 0.11 & 57.56$\pm$ 15.55 & 7.7$\pm$ 0.17  \\
          \bottomrule
        \end{tabular}
      \end{sc}
    \end{small}
  \end{center}
  \vskip -0.1in
\end{table}

\section{Discussion}\label{sec:discussion}

In this paper, we first introduced a flexible framework for diverse and good recommendation, as a generic family of DPPs. Then, 
we introduced \ALGO{} 
to integrate 
user history and achieve the QDT in a computationally tractable fashion. Finally, we proposed an algorithmic layer for any member of the DQD family to adaptively tune the level of diversity to the user across recommendation rounds. 

Our work still suffers from some limitations. First, as mentioned in Section~\ref{sec:experiments}, beyond a certain value of $\alpha$, \ALGO{} can no longer sample enough diverse items, as $\alpha$ filters out most of the items. This issue is not specific to our contribution, as \citet{ibrahim2025diversified} has also reported it for other conditional DPPs. It arises when the rank of the likelihood matrix ($L$-matrix) of the DPP is strictly smaller than $\NBATCH$. Possible solutions to bypass this issue would be (1) to return a batch of recommended items of size smaller than $\NBATCH$ (with a risk of no longer recommending any item at some point), (2) to only focus on quality and set the matrix $f(\KERNEL, \BATCH{}, \HISTORY{\USER{}}{}) \gets I_\NITEMS$, (3) to forget some of the user history, and only preserve either the most representative points or the latest recommendations. Second, when the features are not informative enough, the diversity metrics might collapse quickly due to the existence of very similar items in the history. We discuss this issue in depth in Appendix~\ref{app:discussion_movielens}, and it could be solved with (3). Future work on both topics would be of interest for practical recommender systems.

\section*{Impact Statement}

This paper presents work whose goal is to advance the field of machine learning. There are many potential societal consequences of our work, none of which we feel must be specifically highlighted here.

\bibliography{icml2026_conference}
\bibliographystyle{icml2026}

\clearpage
\appendix
\onecolumn

\section{Related works (continued)}\label{app:related_works}

Due to space constraints in the main text, we write here a more comprehensive section than in Section~\ref{sec:related_works} about prior works on recommender systems with diversity. 

\subsection{Diversity metrics}

In Section~\ref{sec:introduction}, we focused on a geometrically intuitive definition of diversity relying on the volume of the parallelotope built by the columns of the kernel function applied to a given subset of items. That definition is versatile enough that, not only could we use it to define intrabatch/local and interbatch/global diversity metrics~\citep{bederina2025bayesian} (see Equation~\ref{eq:def_metrics}), but it also enabled us to produce a single metric for the quality-diversity trade-off (that is, the $\text{div}^{+}$ metric in Equation~\ref{eq:qdt_metric}). These metrics hold for any feedback model and kernel function, hence they are very flexible. As described in Section~\ref{sec:related_works}, prior works~\citep{carbonell1998use,kaminskas2016diversity} also proposed a linear or convex combination of relevance and diversity metrics, controlled by a parameter $\lambda \leq 0$, or a quality-diversity trade-off induced by a matrix decomposition~\citep{kulesza2010structured}. In our paper, we actually combine the two approaches. We discuss below other metrics mentioned in the literature. Note that we make a distinction between metrics which are only evaluated in the experimental studies (\EG intralist average distance or category coverage~\citep{ge2010beyond,kaminskas2016diversity,chen2018fast,li2024contextual}) from the metrics which are optimized upon and actually used in the recommender systems. We focus on the latter in this section.

A strong competitor of our metrics are ridge leverage scores. Given the ridge hyperparameter $\zeta \geq 0$, for a given kernel function $\KERNEL$, \citet[Definition $1$]{musco2017recursive} define the ridge leverage score of item $\ITEM{i} \in \Omega := \{\ITEM{1},\ITEM{2},\dots,\ITEM{\NITEMS}\}$ as $ r^\zeta_{\KERNEL}(\ITEM{i}) := (L(L+\zeta I_\NITEMS)^{-1})_{i,i}$, where $L := \KERNEL_{\Omega,\Omega}$. Alternatively, defining $U \in \bbR^{\NITEMS \times \DI}$ such that $K = UU^\intercal$, the ridge leverage score for $\ITEM{i}$ is the value of the ridge regression problem $\min_{\bm{y} \in \bbR^{\NITEMS}} \|U_{i,\cdot} - \bm{y} U\|_2^2 + \zeta\|\bm{y}\|^2_2 \in [0,1].$ In other words, $\bm{y}$ is the mixing vector that achieves the linear combination of rows in $U$ closest to the $i^\text{th}$ row of $U$ (which is the row associated with $\ITEM{i}$). Then the ridge leverage score is a measure of the unicity of the vector $\ITEM{i}$ among elements in $\Omega$: the smaller it is, the least ``unique'' $\ITEM{i}$ is. 
However, computing the exact ridge leverage scores requires potentially inverting a matrix of size $\NITEMS \times \NITEMS$ which itself has a time complexity of $\mathcal{O}(\NITEMS^3)$. To get a measure of diversity for a whole set of items $\BATCH{}$, summing all leverage scores across $\BATCH{}$ yields the ``effective dimension'', also called ``degrees of freedom'': $d^\zeta_{\text{eff},\KERNEL}(\BATCH{}) := \sum_{\ITEM{i} \in \BATCH{}} r^\zeta_{\KERNEL}(\ITEM{i}) = \text{tr}(L(L+\zeta I_\NITEMS)^{-1})\:.$ 
This measure of diversity is quite intuitive, and the time complexity is in $\mathcal{\widetilde{O}}(\NITEMS)$, where $\mathcal{\widetilde{O}}$ hides logarithmic terms in $\NITEMS$~\citep{chen2021fast}. Then, similarly to determinants and volumes where Cholesky decompositions on sparse or low-rank matrices can be leveraged (see Appendix~\ref{app:tractable}), the time complexity of these measures might be linear in the number of items. However, since the determinant is directly optimized (in the \textsc{MAXIMIZATION} strategy) or considered (in the \textsc{SAMPLING} strategy) in our algorithms, an evaluation of recommender systems by the volume might be more consistent than by leverage scores.

As mentioned in Section~\ref{sec:introduction}, the topic of the quality-diversity trade-off is related to serendipity--which is also called out-of-the-box~\citep{abbassi2009getting}, unexpectedness~\citep{xu2024serendipitous}, surprise~\citep{ziarani2021serendipity}--where an unexpectedly good item is recommended to a user, contrary to simply recommending items which are very different from those in the user history or cover different categories (which would be diversity). Looking for serendipitous items allows us to overcome the popularity bias in recommendation~\citep{yu2024long}. \citet{kaminskas2016diversity,poulain2020investigating,ziarani2021serendipity,kotkov2024dark} show that there is no consensus regarding the concept of serendipity, nor a single metric. The same holds for diversity metrics: recently, \citet{mironov2024measuring} tried to formalize the expected properties of a good diversity metric, showed that known diversity metrics do not match them, and proposed two matching metrics. These properties are monotonicity (the diversity of the union of two sets should be higher than the maximum diversity for each of these sets), uniqueness (replacing an element from a set with a copy of an already present element should decrease the diversity of the modified set), and continuity of the diversity metric. In particular, the determinant does not satisfy the monotonicity property. Yet the metrics that \citet{mironov2024measuring} proposed are too computationally expensive in practice. In~\citet{yu2024long}, authors built a recommender system for long-tail Web services/items likely to be queried by applications/users, which is based on linear propagation of information on graphs of interconnected services with a LightGCN~\citep{he2020lightgcn}. This allows them to learn item and user embeddings. The predicted probability of querying a service by an application is the inner product of the corresponding embeddings. A recommender system is then trained on those predicted scores with a pairwise Bayesian Personalized Ranking (BPR) loss, which is appropriate with implicit and sparse feedback data. Other papers also study diversity through random walks on graphs~\citep{poulain2020investigating}. Contrary to our paper, most of the reported metrics combine metrics of item unpopularity (related to relevance) and item dissimilarity~\citep{iaquinta2008introducing,abbassi2009getting} to be applied at a reranking stage where a subset of relevant items has already been retrieved. However, as reported in~\citet{li2024contextual}, filtering too early for relevant items might cut down diversity in an irreversible way. Moreover, albeit the presence of numerous novelty-related metrics, the selection of a single metric is subjective, whereas the quality and diversity metrics are usually straightforwardly guided by the recommendation task. 

\subsection{About distributions on random subsets of points}

This section focuses on theoretical developments in DPPs which are relevant to our paper.

Sampling algorithms for DPPs are naively in $\mathcal{O}(\NITEMS^3)$~\citep[Theorem 7 and Algorithm 18]{hough2006determinantal} with $\NITEMS$ the number of available items, but their complexity may be reduced down to $\mathcal{O}(\alpha \NITEMS \cdot \text{poly}(\NBATCH))$ when sampling a subset of fixed size $\NBATCH$ with an $\alpha$-DPP~\citep{calandriello2020sampling} where $\alpha \leq 1$. However, one might want to find the subset with highest probability--that is, the most diverse--instead of sampling according to the DPP. The associated maximization problem is NP-hard~\citep{ko1995exact,grigorescu2022hardness}, but there are greedy approximations that we discuss in this paragraph. 
All in all, selecting $\NBATCH$ elements among $\NITEMS$ using a determinantal point process (DPP) has a time complexity of $\mathcal{O}(\NITEMS \NBATCH^2)$, after finding the eigenvalues of the likelihood matrix, which has a time complexity of $\mathcal{O}(\NITEMS^3)$. In the case of a linear kernel on $\DI$-dimensional item embeddings, finding the eigenvalues has a time complexity of $\mathcal{O}(\NITEMS \DI^2)$. Moreover, using the dual representation of DPPs, sampling $\NBATCH$ elements among $\NITEMS$ has a time complexity of $\mathcal{O}(\NITEMS \DI \NBATCH^2 + \DI^2 \NBATCH^3)$~\citep{kulesza2012determinantal}, whereas the greedy approximation used for the maximization strategy (which is a NP-hard problem otherwise) has a naive time complexity of $\mathcal{O}(\NBATCH^2 \NITEMS)$. However, several works~\citep{gillenwater2012near,han2017faster,chen2018fast} improve upon this time complexity, by leveraging the fact that the logarithm of the determinant of the likelihood matrix in the DPP should be maximized 
and using linear algebra approximations. $\log \det$ is a submodular function but not monotone; instead of obtaining a $1 - 1/e$ approximation  of the exact solution~\citep{kempe2003maximizing}, proposed approaches yield 
an $1/4$ approximation~\citep{gillenwater2012near}, and run with a time complexity in $\mathcal{O}(\NITEMS \NBATCH^2)$~\citep{chen2018fast}. Moreover, \citet{mariet2019dppnet} introduce an approximation of a DPP--without the quality-diversity trade-off--by a deep learning model, named DPPNet, which could also be used to further empirically speed up the sampling and maximization of a DPP. Implementing our contributions with a DPPNet instead of a DPP might be a future venue for enhancing their use in real-life applications.

A recent paper by~\citet{kawashima2024family} introduces a family of distributions on sets of points which includes DPPs, called discrete kernel point processes (DKPPs), where the attraction and the repulsion of sampled points can be controlled explicitly. Given a continuous function $\psi : \bbR^{+} \mapsto \bbR$ and $\KERNEL_{\Omega,\Omega}$ a positive-semidefinite Hermitian matrix, the probability of sampling the subset $A \subseteq \Omega$ is given by $\mathbb{P}(\BATCH{}=A) \propto \exp\big(\TRACE(\psi(\KERNEL_{A,A})) \big)$. $\psi$ controls the parametrization between positive and negative correlations. DKPPs enable maximization, \IE finding the subset with highest probability, and sampling as well. However, using DPPs allows us to leverage existing implementations and algorithms. Furthermore, positive correlations between items are not necessarily linked to relevance, as the user information or context might also intervene in the quality score.

\subsection{Baselines for diversified recommendations}

This section gives an overview of the state-of-the-art on diversified recommender systems. We focused on Determinantal Point Processes (DPPs) in our paper to leverage the literature regarding fast implementation of sampling and maximization algorithms, and to incorporate some flexibility in the description of item similarity thanks to the use of kernel functions. Here, we do not consider prior works where the feedback model needs to be learned on the fly~\citep{radlinski2008learning,chao2015large,kathuria2016batched,nava2022diversified,hikmawati2024improve,park2025multi}.

As mentioned in Section~\ref{sec:introduction}, DPPs are popular in the field of recommender systems. Recently, \citet{ibrahim2025diversified} applied two DPPs (a traditional quality-diversity decomposition, and one with a linear kernel, without quality scores nor trade-off parameter) to output relevant and diverse recommendations of cultural goods to teenagers, again, in the sense of ``intrabatch'' diversity.  They reported an improved diversity (volume) in the recommendation with DPPs, to the price of a notable part of the relevance, both in offline and online/live experiments. \citet{wilhelm2018practical} (using a quality-diversity decomposition with a RBF kernel) also applied DPPs on video recommendation. Authors showed that DPPs, contrary to all other baselines, yielded an increase in the number of long user sessions, which are indicative of user satisfaction with the recommendations.

As claimed in the main text, the framework described in Section~\ref{sec:ALGO} allows us to recover many well-known DPPs from the literature. For instance, traditional quality-diversity decomposition~\citep{kulesza2010structured} can be obtained by setting $f^\text{QDDecomp}(\KERNEL, \BATCH{}, \HISTORY{\USER{}}{}) := \KERNEL_{\BATCH{},\BATCH{}}$ with $\lambda=0.5$, where $\KERNEL$ is the linear kernel $\KERNEL(\{\psi\}, \{\phi\}) = \psi^\intercal \phi$ or, equivalently, $\nu : x \mapsto x$. A conditional DPP~\citep{borodin2005eynard} can be described using $\lambda=0.5$ and $f^\text{CondDPP}(\KERNEL, \BATCH{},  \HISTORY{\USER{}}{}) := \KERNEL_{\BATCH{},\BATCH{}}-\KERNEL_{\BATCH{},\HISTORY{\USER{}}{}}(\KERNEL_{\HISTORY{\USER{}}{},\HISTORY{\USER{}}{}})^{-1}(\KERNEL_{\BATCH{},\HISTORY{\USER{}}{}})^\intercal$. Finally, an $\varepsilon$-greedy approach can be obtained by setting $f^\text{$\varepsilon$-greedy}(\KERNEL, \BATCH{}, \HISTORY{\USER{}}{}) := \text{I}_{\NBATCH}$ with $\lambda=0.5$ $\varepsilon$\% of the rounds (greedy phase), and setting $f^\text{$\varepsilon$-greedy}(\KERNEL, \BATCH{},  \HISTORY{\USER{}}{}) := \KERNEL_{\BATCH{},\BATCH{}}$ with $\lambda=0$ the other $(1-\varepsilon)$\% of rounds (exploratory phase). 

Recommender systems that build upon MMR, which is a strong baseline in the field of diverse recommendations, have also been investigated in the literature. Recently, \citet{li2024contextual} introduced Contextual Distillation Model (CDM), which trains a surrogate attention-based deep model of Maximal Marginal Relevance (MMR), for a diversity metric correlated to the scalar product of item and user vectors. The surrogate model estimates the probability of a given item of being ranked among the top-$\NBATCH$ items by the MMR score (see Section~\ref{sec:related_works}). Authors compared their contribution to a DPP~\citep{chen2018fast} and MMR, showing that the DPP had a better performance diversity-wise, but arguing that the latter have a quadratic time complexity in the number of items $\NITEMS$. However, we showed in our work that, provided some approximations (see Appendix~\ref{app:tractable}), we can actually run DPPs (\ALGO{} and the conditional DPP) and MMR on libraries of dozens of millions of items. Note that \citet{li2024contextual} only focused on what we call ``intrabatch diversity''.

In addition to~\citet{affandi2012markov}, there are other papers regarding the improvement of the diversity compared to a past sequence of recommended items. In addition to proposing a faster algorithm for the \textsc{MAXIMIZATION} strategy for DPPs, \citet{chen2018fast} also studies the case when the diversity is only required among items in a sliding time window. They use a standard quality-diversity decomposition for the likelihood matrix, with the \textsc{MAXIMIZATION} strategy, and modify the maximization problem such that only the last $\omega-1$ recommended items are taken into account in the greedy algorithm. However, their approach is suitable when a long series of interactions with the user occurs and when we allow the recommender system to forget the oldest recommendations. Yet, in practice, users interact only few times with the recommender system before settling on a recommendation or dropping out~\citep{ben2022modeling,gusak2025recommendation}. 

Finally, few papers on the literature deal with the issue of tuning the level of diversity in the recommendations to the user. Given a notion of usefulness (\IE relevance) and unexpectedness, which is somewhat related to diversity (see the previous paragraph). \citet{xu2024serendipitous} studies the problem of finding the good proportions of usefulness and unexpectedness in recommendations for each user after a step of retrieval of relevant items. The latter is closely related to the challenge of addressing the adaptive quality–diversity trade-off in Section~\ref{sec:adaptive_lambda}, albeit our procedure operates at any stage of recommendation (including retrieval) and does not feature a serendipity metric. The metric in~\citet{xu2024serendipitous} incorporates for each user a convex combination of long-term and short-term preferences as the usefulness-unexpectedness trade-off (``curiosity'') parameter. In that paper, this parameter is precomputed for each user, and cannot change anymore, unlike our approach.

\clearpage

\section{Practical implementation}\label{app:tractable} 

We resorted to the Python package DPPy~\citep{gautier2019dppy,calandriello2020sampling} for the implementation of the \textsc{SAMPLING} strategy, whereas we used the greedy algorithm in~\citet{chen2018fast} for the \textsc{MAXIMIZATION} approach. Please refer to the \texttt{requirements.txt} file in the code for the package versions. This section is motivated by the fact that, in Algorithm~\ref{alg:recom_setting}, the step of computation of the likelihood matrix for the DPP in Line 5 and of the recommendation set in Lines 6-7 might be expensive (at least in $\mathcal{O}(\NITEMS^3)$) when $\NITEMS$ is large when implemented naively. 

We also used sparse matrices as implemented in the Python package SciPy~\citep{virtanen2020scipy} with a denuding procedure: meaning that we could make the matrix sparser by rounding up values in matrices up to the $q^\text{th}$ decimal place to save memory. In practice, for our experiments, we did not have to round up values. However, for even larger libraries of items, this technique might prove useful. Below, we list the most computationally expensive parts of the code and explain the solutions considered to get a tractable implementation.

\subsection{Nystr\"{o}m approximations of the kernel function}\label{app:nystroem}

One important element of the code is that a dense matrix of size $\NITEMS \times \NITEMS$ should never be stored in full in the RAM. To comply with this rule, we extensively relied on the feature map $\nu$ associated with a kernel function $\KERNEL$ (which also facilitates the computations as if $\KERNEL$ were a linear function). To compute $\nu$, we use a Nystr\"{o}m approximation~\citep{nystrom1930praktische} that assumes that the rank of $\KERNEL_{\Omega,\Omega}$ is actually $C \ll \NITEMS$. Given a dimension $\DI' \ll \DI$, the Nystr\"{o}m approximation starts by subsampling a set of $C$ items $\mathcal{I}$ in $\Omega$ at random (without replacement), and then builds the eigendecomposition of the real-valued matrix $\KERNEL_{\Omega,\Omega}$, that is, the evaluation of the kernel function on the whole universe.
\begin{eqnarray*}
    \KERNEL_{\Omega,\Omega} & = & U \Lambda U^\intercal = \begin{bmatrix} U_{\mathcal{I}} \Lambda U_{\mathcal{I}}^\intercal & U_{\mathcal{I}} \Lambda U_{\neg \mathcal{I}}^\intercal\\
   U_{\neg \mathcal{I}} \Lambda U_{\mathcal{I}}^\intercal & U_{\neg \mathcal{I}} \Lambda U_{\neg \mathcal{I}}^\intercal \end{bmatrix} = \begin{bmatrix} \KERNEL_{{\mathcal{I}},{\mathcal{I}}} & \KERNEL_{{\mathcal{I}},{\neg \mathcal{I}}}\\
   \KERNEL_{{\mathcal{I}},\neg {\mathcal{I}}}^\intercal & \KERNEL_{{\neg \mathcal{I}},{\neg \mathcal{I}}} \end{bmatrix}\:,
\end{eqnarray*}

where $U_{\mathcal{I}}$ (respectively, $U_{\neg \mathcal{I}}$) is the part of the orthonormal basis of the eigendecomposition associated with items in $\mathcal{I}$ (resp., not in $\mathcal{I}$), and $\Lambda$ is the diagonal matrix of eigenvalues of $\KERNEL_{\Omega,\Omega}$.  When one needs to compute $\KERNEL_{\BATCH{},\BATCH{}}$ for any subset $\BATCH{} \subset \Omega$ of size $\NBATCH$, using the fact that $\KERNEL_{\BATCH{},\BATCH{}} = \big(\KERNEL_{\BATCH{},\mathcal{I}}^\intercal U_{\mathcal{I}}\Lambda^{-1/2}\big) \big(\KERNEL_{\BATCH{},\mathcal{I}}^\intercal U_{\mathcal{I}}\Lambda^{-1/2}\big)^\intercal$ by replacing $U_{\neg \mathcal{I}}$ by $\KERNEL_{\neg \mathcal{I}, \mathcal{I}} U_{\mathcal{I}} \Lambda^{-1}$. $\Lambda^{-1/2}$ consists in simply applying the square root function to the eigenvalues of $\KERNEL_{\Omega,\Omega}$, whereas $\KERNEL_{\BATCH{},\mathcal{I}} \in \bbR^{\NBATCH \times \DI'}$ where $\DI' \ll N$ can be explicitly evaluated. All in all, the corresponding $\nu$ function is $\nu(\BATCH{}) = \KERNEL_{\BATCH{},\mathcal{I}}^\intercal U_{\mathcal{I}}\Lambda^{-1/2}$ for any $\BATCH{} \subset \Omega$. 

As previously mentioned in the main text, the time complexity of the  Nystr\"{o}m approximation is $\mathcal{O}(\NITEMS (\DI')^2 + (\DI')^3)$. In practice, we use the implementation of the Nystr\"{o}m approximation in the Python package scikit-learn~\citep{pedregosa2011scikit}. The computation of $\mathcal{I}$, $U_{\mathcal{I}}$ and $\Lambda$ happens only once before any recommendation is made. We use $\DI'=100$ for the rank in the Nystr\"{o}m approximation. We note that, while we use random elements to build the approximation, other papers suggest to consider more representative points~\citep{tremblay2019determinantal}, for instance, those with maximum leverage score which would be more representative of the set. However, this approach can be costly (see Appendix~\ref{app:related_works}) which is why we elected to stick to random sampling.

Note that the computation of the volume (see metrics in Equation~\ref{eq:def_metrics}) also relies on the Nystr\"{o}m approximation, as the full matrix $\KERNEL_{\BATCH{},\BATCH{}}$ is not built. This accounts for the fact that, although Assumption~\ref{as:bounded_norm} mentions that the item embeddings have a $\ell_2$-norm equal to $1$, the actual computed volumes in the experiments in Section~\ref{sec:experiments} might exceed $1$. The Nystr\"{o}m approximation does not preserve the property on the norm of the item embeddings.

\subsection{Approximate matrix inversions and determinants}

For the conditional and the k-Markov~\citep{affandi2012markov} DPPs, there is a step of inversion of a positive definite matrix of potentially size $\NITEMS \times \NITEMS$, which is extremely costly in a naive implementation. As suggested by prior works~\citep{burian2003fixed}, we resort to Cholesky decompositions of the matrices that we wish to inverse. The Cholesky decomposition of a positive-definite matrix $M$ of size $n$ is $M = R^\intercal R$ where $R$ is an upper (or lower) triangular matrix of size $n$. Then, as $M^{-1} = R^{-1}(R^{-1})^\intercal$, it suffices to solve $RX = I_n$ in $X$ by forward substitution, because $R$ is a triangular matrix, to obtain the inverse of $M$. To compute the (log-)determinant of $M$ of size $n$, the following relation is true: $\log\det(M)=2\sum_{i=1}^n \log R_{i,i}$, because $R$ is a triangular matrix with positive diagonal elements~\citep{madar2015direct} and $\det R = \det R^\intercal$.

In our implementation, we use the Python package Scikit-Sparse.  For the conditional DPP, the Nystr\"{o}m approximation of the kernel function can be leveraged on top of the Cholesky decomposition (decomposing $\widetilde{A}^{1/2}\widetilde{A}^{1/2} \approx A$ instead of $A$, which is less expensive due to the low-rank assumption for $\widetilde{A}^{1/2}$), which is why the conditional DPP can be run on large data sets contrary to Markov DPP. Indeed, the theoretical time complexity of the Cholesky decomposition is in $\mathcal{O}(\NITEMS^3)$ as a general rule. However, for sparse matrices, a fill-reducing Cholesky decomposition can be computed by reordering rows to restrict the creation of new nonzero elements~\citep{duff2009combinatorial}, and for sparse matrices, the practical time complexity is typically much smaller than cubic in $\NITEMS$.

\subsection{Approximate matrix power}

Matrix power intervenes in Equation~\ref{eq:extension_qdd} which describes the likelihood matrix for a DPP of the DQD family for $\lambda \neq 1/2$, where the power is possibly real-valued. The exact computation would require computing the whole set of eigenvalues and eigenvectors, which is quickly intractable as $\NITEMS$ increases. Similarly to Appendix~\ref{app:nystroem}, we make a low-rank assumption and consider the truncated Singular Value Decomposition (SVD) of rank $r \ll \NITEMS$ of a real-valued matrix $M$ where $M$ is a square real matrix of size $\NITEMS$. The SVD yields $M=U \Lambda V^\intercal$, where $U, V \in \bbR^{r \times \NITEMS}$ are orthogonal matrices. Then we compute $V \Lambda^p V^\intercal$ as a proxy for $M^p$, that is, $M$ multiplied $p$ times where $p \in \bbR^+$. Indeed, 
$M^p \approx (V^\intercal \Lambda V)^p=\underbrace{(V^\intercal \Lambda V)(V^\intercal \Lambda V)\dots(V^\intercal \Lambda V)}_{\text{$p$ times}}=V^\intercal \Lambda^p V,$ since $VV^\intercal=I_r$ and $V^\intercal V=I_\NITEMS$. This operation has a time complexity in $\mathcal{O}(\NITEMS r^2)$ instead of the naive time complexity $\mathcal{O}(\NITEMS^3)$ which is dominated by the computation of the full eigendecomposition. Moreover, if we need to find $L$ instead, where $LL^\intercal=M^p$, we can output $L=V \sqrt{\Lambda^p}$ where $\sqrt{\Lambda^p}$ is such that $\sqrt{\Lambda^p}\sqrt{\Lambda^p}=\Lambda^p$. Note that, for the diagonal matrix $\Lambda^p$, $\sqrt{\Lambda^p}$ is the result of the square root function applied element-wise to the diagonal elements of $\Lambda^p$. We choose $r=\NITEMS-1$ if $\NITEMS\leq 1\,000$ and $r=100$ otherwise.

\subsection{Approximate closest-neighbor finding algorithms}

\ALGO{} requires the computation of the maximum cosine distance between feature maps of items from the current batch and the user history (see Equation~\ref{eq:ALGO}). Obviously, computing all cosine distances between any item and all items in the user history would be expensive since it is performed for all items in $\Omega$. As mentioned in Section~\ref{sec:ALGO}, many approaches allow us to retrieve the closest neighbor of a point in a set. Our implementation uses FAISS~\citep{douze2024faiss}, which is suitable for large data sets. In practice, we do not even need to build a new tree for each user and each recommendation round, but only a single tree on all items in $\Omega$. Indeed, queries to FAISS can be made while ignoring a subset of items when querying for the closest neighbor (in our case, ignoring all items which are not in the current user's history).

\subsection{Retrieving embeddings}

Finally, our implementation cannot afford to store the full item embedding matrix $\Phi=[\ITEM{i}, i \in \Omega] \in \bbR^{\NITEMS \times \DI}$ in the RAM. First, we store the user histories in files, and update them in-place. Second, we store, access and read the item embedding matrix by batches in memory with .csv files. Moreover, for the largest data set of our paper (the synthetic data set SYNTHETIC15M with five millions items), we use the .npy binary format instead of .csv for faster input/output operations.  

\clearpage

\section{Adaptive quality-diversity trade-off}\label{app:adaptive}

We recall here the online learning game for the adaptive quality-diversity trade-off (see Section~\ref{sec:adaptive_lambda}). The player is the recommender system, and Nature is the interaction with \textit{one} user $(\USER{}, \HISTORY{\USER{}}{t})$ querying the recommender system at time $t$~\citep{auer2002nonstochastic,de2014follow}. At round $t \leq \HORIZON$ of the game, the recommender system chooses a value of $\lambda^t \in [0,1]$. Then, the recommender system makes a recommendation $\BATCH{t}$ to user $\USER{}$ and receives a loss value related to the quality-diversity trade-off achieved, depending on the vector of feedback $\bm{y}^t$ from Nature. The recommender system should use this loss to update and use $\lambda^{t+1} \in [0,1]$ at the next round. The game ends after $\HORIZON$ interactions with the same user.

\subsection{Implementation of the adaptive diversity-tuning procedure}\label{subapp:adaptive_implementation} 

Applying the AdaHedge algorithm~\citep{de2014follow} to our problem, we consider two experts. One favors diversity, whereas the other prefers quality. We denote the posterior weight vector associated with those two experts $\bm{\lambda}^t := [1-\lambda^t, \lambda^t]$ at time $t$. In our case, we consider a gain instead of a loss function--since the optimization problem in Equation~\ref{eq:regret_adaptive} is a maximization problem. The function (to maximize) at time $t$ with $\lambda^t$ is 
\begin{eqnarray}\label{eq:extension_qdd_lambdat}
    \SETSCORE{t}^{\lambda^t}_{f,\bm{y}^t}(\BATCH{t};\USER{t}) & =& 4(1-\lambda^t) \log \VOLUME{f(\KERNEL, \BATCH{t}, \HISTORY{\USER{t}}{t})} + 4\lambda^t \log \det\big(\text{diag}(\{y^t_k\}_{k \leq \NBATCH})\big) \\
    & =& 4(1-\lambda^t) \log \VOLUME{f(\KERNEL, \BATCH{t}, \HISTORY{\USER{t}}{t})} + 4\lambda^t \sum_{k \leq \NBATCH} \log y^t_k \:, \nonumber
\end{eqnarray}
where $f=f^\text{BDR}$ (note however that this approach straightforwardly works for any DPP of the DQD family as described in Equation~\ref{eq:extension_qdd}). Now, since $\SETSCORE{t}^{\lambda^t}_{f,\bm{y}^t}(\BATCH{t};\USER{t})$ is clearly linear in $\lambda^t$ (hence concave), we can apply the gradient trick as follows
\begin{eqnarray*}
   \mathcal{R}(\HORIZON; \USER{}) & := & \max_{\lambda \in [0,1]} \sum_{t \leq \HORIZON} \SETSCORE{t}^\lambda_{f,\bm{y}^t}(\BATCH{t};\USER{}) - \SETSCORE{t}^{\lambda^t}_{f,\bm{y}^t}(\BATCH{t};\USER{})\\
   & \leq & \max_{\lambda \in [0,1]} \sum_{t \leq \HORIZON} (\lambda-\lambda^t) \nabla_\lambda \SETSCORE{t}^{\lambda^t}_{f,\bm{y}^t}(\BATCH{t};\USER{})\:.
\end{eqnarray*}
The gradient $\nabla_\lambda \SETSCORE{t}^{\lambda}_{f,\bm{y}^t}(\BATCH{t};\USER{})$ with respect to $\lambda \in [0,1]$ is straightforward to compute
\begin{eqnarray}\label{eq:gradient_qdd_lambdat}
    \nabla_{\lambda} \SETSCORE{t}^{\lambda}_{f,\bm{y}^t}(\BATCH{t};\USER{t})  & := & - 4 \log \VOLUME{f(\KERNEL, \BATCH{t}, \HISTORY{\USER{t}}{t})} + 4 \sum_{k \leq \NBATCH} \log y^t_k \:.
\end{eqnarray}
Then, we obtained an upper bound on the quantity that we want to optimize in the form of a inner product. This is the motivation for using as a gain function at time $t$
\[g_t : \bm{\lambda} \in \triangle_2 \mapsto [-\nabla_{\lambda} \SETSCORE{t}^{\lambda}_{f,\bm{y}^t}(\BATCH{t};\USER{t}), \nabla_{\lambda} \SETSCORE{t}^{\lambda}_{f,\bm{y}^t}(\BATCH{t};\USER{t})]\:,\]
 where $\triangle_2$ is the simplex of dimension $2$. Note that the first coordinate of $g_t$ $g_t(\bm{\lambda})_1$ is actually the gradient of $\SETSCORE{t}^{\lambda}_{f,\bm{y}^t}(\BATCH{t};\USER{})$ with respect to $1-\lambda$ (proven with a change of variable). Then in the AdaHedge learner, at time $t$, we update the posterior weight with 
\[-\langle \bm{\lambda}^t, g_t(\bm{\lambda}) \rangle = -\big((1-\lambda^t)(-g_t(\bm{\lambda})_2)+\lambda^tg_t(\bm{\lambda})_2\big) = (1-2\lambda^t)g_t(\bm{\lambda})_2 = (1-2\lambda^t)\nabla_{\lambda} \SETSCORE{t}^{\lambda}_{f,\bm{y}^t}(\BATCH{t};\USER{t})\:,\]
which corresponds to Line 8 in Algorithm~\ref{alg:recom_setting}. Then, we obtain a new value $\bm{\lambda}^{t+1} \in \triangle_2$, from which we extract the second coordinate $\lambda^{t+1}$ for the next round.

\subsection{Upper bound on the regret}\label{subapp:adaptive_bounds} 

\begin{theorem}{\textnormal{Upper bound on the regret incurred by the adaptive diversity tuning procedure (Theorem~\ref{th:regret_UB} in the main text).}} An upper bound on the regret $\mathcal{R}(\HORIZON)$ (Equation~\ref{eq:regret_adaptive}) incurred by the adaptive strategy for tuning the level of diversity $\lambda \in [0,1]$ for user $\USER{}$ over $\HORIZON$ rounds of recommendations is 
\[\mathcal{R}(\HORIZON) \leq 2\delta_T\sqrt{T\log(2)}+16\delta_T(2+\log(2)/3)\:,\]
where $\delta_T := 8 \max_{t \leq T} \log \frac{M_t^B}{a_t}$, $a_t := \VOLUME{f(k,S^t,H^t)}$ and $M_t := \max_{i \leq B} y^t_i$.
\end{theorem}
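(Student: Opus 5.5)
The plan is to reduce the problem to a two-expert prediction-with-expert-advice game and then invoke the AdaHedge regret bound of \citet{de2014follow}. By Equation~\ref{eq:extension_qdd_lambdat}, the round-$t$ gain is affine in $\lambda$:
\[
\SETSCORE{t}^{\lambda}_{f,\bm{y}^t}(\BATCH{t};\USER{})=4\log a_t+\lambda\, c_t,
\qquad c_t:=4\sum_{k\le\NBATCH}\log y^t_k-4\log a_t .
\]
Hence the gradient trick already displayed in Appendix~\ref{subapp:adaptive_implementation} holds with equality. This gives
\[
\mathcal{R}(\HORIZON)=\max_{\lambda\in[0,1]}\sum_{t\le\HORIZON}(\lambda-\lambda^t)c_t .
\]
The maximum of a linear function over $[0,1]$ is attained at a vertex. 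So $\mathcal{R}(\HORIZON)$ is exactly the regret of the weight vector $\bm{\lambda}^t=[1-\lambda^t,\lambda^t]$ against the best of two experts. The experts receive losses $\ell_t=(g_t)_1,(g_t)_2$ up to a common shift, namely $\ell_t=(c_t,-c_t)$ after a sign change from gains to losses. These are precisely the losses fed to AdaHedge in Line~8 of Algorithm~\ref{alg:recom_setting}, so the iterates $\lambda^t$ are AdaHedge iterates on this two-expert problem.

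Next I bound the loss range. Assumption~\ref{as:feedback_model} gives $y^t_k>0$, and $y^t_k\le M_t$. Therefore $4\sum_k\log y^t_k\le 4\log M_t^{\NBATCH}$, and so
\[
c_t\le 4\log\frac{M_t^{\NBATCH}}{a_t}.
\]
The range of the loss vector in round $t$ is $|c_t-(-c_t)|=2|c_t|$. I will take the per-round scale to be
\[
s_t\le 8\log\frac{M_t^{\NBATCH}}{a_t}\le\delta_T ,
\]
with $S:=\max_t s_t\le\delta_T$. This is the step I expect to be the main obstacle. The quantity $\log(M_t^B/a_t)$ only upper bounds $c_t$, not $|c_t|$, so $c_t$ could be very negative if some $y^t_k$ is tiny. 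I would first try to argue that the relevant lower side is controlled as well. One option is to note that the same bound arises when $-c_t$ is large, since $a_t$ and the $y$'s enter symmetrically after taking a maximum. If that fails, I would add an explicit normalisation making $\log y^t_k\ge0$ and $\log a_t\le0$, under which $0\le c_t\le 4\log(M_t^B/a_t)$. The factor $8$ in $\delta_T$ leaves room for the factor $2$ from the range.

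Finally, I apply Theorem~8 of \citet{de2014follow}, the AdaHedge bound with $K=2$ experts. It states
\[
\mathcal{R}\le 2\sqrt{S^2\,\frac{(L^*)(T S-L^*)}{T S^2}\,\ln K}+S\Big(\tfrac{16}{3}\ln K+2\Big),
\]
or, in the worst-case form of their Corollary~17,
\[
\mathcal{R}\le 2S\sqrt{T\ln K}+16S\big(2+\tfrac{\ln K}{3}\big).
\]
Bounding $L^*(TS-L^*)\le (TS)^2$ where needed and substituting $K=2$ and $S\le\delta_T$ yields
\[
\mathcal{R}(\HORIZON)\le 2\delta_T\sqrt{T\log 2}+16\delta_T(2+\log(2)/3).
\]
AdaHedge needs no knowledge of $S$ or $T$, so using the data-dependent $\delta_T$ is legitimate. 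Along the way I need to check that the shift between gains and losses, and the vertex argument, do not alter the constants.
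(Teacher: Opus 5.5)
Your reduction is the same as the paper's. The gain is affine in $\lambda$ with slope $c_t$ (the paper's $C_t$), so the regret becomes a two-expert AdaHedge regret with loss vectors $(c_t,-c_t)$. Theorem~8 and Corollary~17 of \citet{de2014follow} are then applied, with the range $2\max_t|c_t|$ in the role of $\delta_T$.

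On the constants you wanted to check: $(c_t,-c_t)$ is obtained from the true expert losses $(0,-c_t)$ by a per-round common shift \emph{and} a factor $2$. AdaHedge is invariant under both, so the iterates coincide. The regret of the $(c_t,-c_t)$ game is therefore exactly $2\mathcal{R}(\HORIZON)$, which only helps you.

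The step you flag, however, is a genuine gap, and your option (a) cannot close it. The quantities $a_t$ and $y^t_k$ do not enter symmetrically: with $M_t$ and $a_t$ fixed, sending a non-maximal $y^t_k\to 0$ drives $c_t\to-\infty$ while $\delta_T$ does not move. In fact the statement is false under Assumptions~\ref{as:kernel}--\ref{as:bounded_norm} alone. Take the linear kernel, $\HORIZON=1$, $\NBATCH=2$, two recommended items with orthogonal unit embeddings and an empty history, so that $f=I_2$ and $a_1=1$. Take $\bm{y}^1=(1,\varepsilon)$ with $0<\varepsilon<1$. Then $M_1=1$ and $\delta_1=0$, so the bound asserts $\mathcal{R}(1)\le 0$. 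But $c_1=4\log\varepsilon<0$ and AdaHedge starts from uniform weights, $\lambda^1=1/2$, giving $\mathcal{R}(1)=\tfrac12|c_1|=2\log(1/\varepsilon)>0$.

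So an extra hypothesis is unavoidable, and your option (b) is the right fix. The part $a_t\le 1$ does follow from unit-norm feature maps via Hadamard's inequality (up to the Nystr\"{o}m distortion the paper mentions). But $y^t_k\ge 1$ must be assumed. With it, $0\le c_t\le 4\log(M_t^{\NBATCH}/a_t)$, so $2|c_t|\le\delta_T$ and the rest of your argument goes through.

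This is exactly what the paper's own proof relies on without saying so. Citing Assumptions~\ref{as:kernel}--\ref{as:bounded_norm}, it asserts constants with $0<m_t\le y^t_i\le M_t$, $M_t\ge 1$ and $a_t\le 1$. It derives $4\log(m_t^{\NBATCH}/a_t)\le C_t\le 4\log(M_t^{\NBATCH}/a_t)$ and then bounds $\max(C_t,-C_t)$ by $4\log(M_t^{\NBATCH}/a_t)$. That last step is justified only when $a_t^2\le(m_tM_t)^{\NBATCH}$, e.g.\ when $m_t\ge 1$.

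State $y^t_k\ge 1$, together with $a_t\le 1$, explicitly as a hypothesis. The paper's noisy experiments, which shift feedback to $\{1,2,\dots\}$, satisfy it. Its noiseless synthetic feedback in $[0,1]$ does not.
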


\begin{proof}
We defined the regret $\mathcal{R}(\HORIZON; \USER{})$ for user $\USER{}$ at time $\HORIZON$ in Equation~\eqref{eq:regret_adaptive}, with the gain function $G_t : \lambda \in [0,1] \mapsto \SETSCORE{t}^\lambda_{f^\text{BDR},\bm{y}^t}(\BATCH{t};\USER{})$ for any $t \leq \HORIZON$. $G_t$ 
is linear (and then concave) in $\lambda$ for all $t \leq \HORIZON$. Moreover, remember that we applied AdaHedge with the gain function $g_t$ at time $t$ in Line 8 in Algorithm~\ref{alg:recom_setting}, where 
\[\forall \lambda \in [0,1] \ , \quad g_t(\lambda) := (1-2\lambda)\underbrace{\left( -4\log(\VOLUME{f(k,S^t,H^t)})+4\sum_{i \leq \NBATCH} \log y^t_i\right)}_{= C_t}\;.\]

by using the computation from Subsection~\ref{subapp:adaptive_implementation}. We denote $\bm{\lambda} := [1-\lambda, \lambda]^\intercal \in \triangle_2$, where $\triangle_2$ is defined as the simplex of dimension 2 $\{ \bm{p} \in [0,1]^2 \mid p_1 + p_2 = 1\}$. Then we can rewrite the expression of $g_t$ as 
\[\forall \lambda \in [0,1] \ , \quad g_t(\lambda) = -\langle \bm{\lambda}, [C_t, -C_t]^\intercal \rangle\;.\]

Then, combining the concavity of $G_t$ in $\lambda$, the fact that $C_t$ is constant in $\lambda$, and the definition of $g_t$,
\begin{eqnarray*}
    \forall \lambda \in [0,1] \ , \sum_{t \leq \HORIZON} G_t(\lambda) - G_t(\lambda^t) & \leq & \sum_{t \leq \HORIZON} \langle \bm{\lambda}-\bm{\lambda}^t, [C_t, -C_t]^\intercal \rangle\\
    \implies \mathcal{R}(\HORIZON; \USER{}) & \leq &\max_{\lambda \in [0,1]} \sum_{t \leq \HORIZON} g_t(\lambda) - g_t(\lambda^t)\;.
\end{eqnarray*}

Then, we apply Theorem $8$ and Corollary $17$ from~\citet{de2014follow} to the loss function $-g_t$, using the fact that $\lambda \in [0,1]$,

\[R^\text{adapt}(T;\textbf{h}) \leq  2\delta_{\HORIZON}\sqrt{\HORIZON\log(2)}+16\delta_{\HORIZON}(2+\log(2)/3)\;, \text{ where } \delta_{\HORIZON} := 2\max_{t \leq \HORIZON} \max(C_t , -C_t)\;.\]

What remains is to evaluate $\delta_\HORIZON$. According to Assumptions~\ref{as:kernel}-\ref{as:bounded_norm}, there are two positive constants $m_t$ and $M_t \geq 1$ such that $0 < m_t \leq y^t_i \leq M_t$, and $0 < a_t := \text{vol}(f(k,S^t,H^t)) \leq 1$. Using the computations in Subsection~\ref{subapp:adaptive_implementation}, the following inequalities hold
\begin{eqnarray*}
    \forall t \leq \HORIZON \ ,  \quad -4\log a_t+4\NBATCH\log m_t &\leq& C_t \leq -4\log a_t+4\NBATCH\log M_t\\
    4\log (m_t^\NBATCH/a_t) &\leq & C_t \leq 4\log (M_t^\NBATCH/a_t)\;.
\end{eqnarray*}
This means that $\delta_{\HORIZON} \leq 8\max_{t \leq \HORIZON} \log (\max(m_t, M_t)^\NBATCH/a_t) = 8\max_{t \leq \HORIZON} \log \frac{M_t^\NBATCH}{a_t}$.
\end{proof}

Note that Assumption~\ref{as:feedback_model} can be replaced by an assumption on the positiveness of \textit{observed} feedback values, hence this result still hold true when we do not have access to a feedback model.
 





\clearpage

\section{Supplementary material on the experimental study}

All experiments have been run on a remote server (configuration:
processor Intel Xeon Processor (Skylake, IBRS), 12 cores @2.4GHz, RAM 23GB). 

\subsection{Generation of synthetic data sets}\label{app:synthetic}

The synthetic data sets aim at providing an easy approach to test the scalability of the algorithms and the diversity in recommendations. The key idea is that $\DI$-dimensional item embeddings ($\ITEM{1},\dots,\ITEM{\NITEMS}$) and user contexts ($\USER{1},\dots,\USER{N_u}$) are sampled at random (hence $\DU=\DI$), and the feedback value for the item-user pair $(\ITEM{}, \USER{})$ is $((\ITEM{})^\intercal \USER{}+1)/2 \in [0,1]$. We actually generate $\NBATCH$ groups of items, with the expectation that, when outputting a batch of $\NBATCH$ recommendations, a good recommender system for diversity would sample elements from each group.

Given a desired number of items $\NITEMS$ and a batch size $\NBATCH$, we first sample at random and independently $\lceil \NITEMS/\NBATCH \rceil \times \DI$ values from $\mathcal{N}(0,2)$ to build a first matrix of $\ell_2$-normalized item embeddings $\Phi^1 := [\ITEM{1},\ITEM{2},\dots,\ITEM{\lceil \NITEMS/\NBATCH \rceil}] \in \bbR^{\lceil \NITEMS/\NBATCH \rceil \times \DI}$. Then, for the $\ell^{\text{th}}$ group, $\ell=2,\dots,\NBATCH$, we construct the matrix $\Phi^\ell := \Phi^i + 0.01\ell \bm{1}_{\lceil \NITEMS/\NBATCH \rceil \times \DI}$, where $\bm{1}_{n \times d}$ is the matrix of size $n \times d$ which coefficients are ones. Finally, we concatenate all (renormalized) matrices $\Phi^1,\Phi^2,\dots,\Phi^\NBATCH$ to obtain $\Phi \in \bbR^{\NITEMS \times \DI}$. For users, given a maximum number of users $N_u$, $N_u \times \DI$ values are sampled from $\mathcal{N}(0,1)$ to build the user contexts, after $\ell_2$-normalization.

\subsection{Training on real-life data sets}\label{app:real_life}

In this section, we delve more into detail concerning the two real-life data sets for movie recommendation and drug repurposing. In both cases, the important part is to learn the feedback model $\FREWARD$ based on prior user-item interactions (\EG previously rated movies or terminated clinical trials, collected offline by browsing clinical trial registries). The feedback model is then used in the computation of the likelihood matrix in Line 5 in Algorithm~\ref{alg:recom_setting}. See the \texttt{README.md} file for instructions to download the files related to those data sets.

\paragraph{Movie recommendation: MovieLens.} The MovieLens data set~\citep{harper2015movielens} that we considered has 9,725 items, 610 users and 100,837 nonzero ratings in $\{1,2,\dots,5\}$. The higher the rating, the more the user liked the movie. Metadata about the items/movies are the movie title, year of release, and the genre tags (\EG comedy, action). First, $512$-dimensional item embeddings of norm $1$ are built by applying Universal Sentence Encoder~\citep{cer2018universal} to the movie metadata, using Python package Tensorflow Hub~\citep{abadi2016tensorflow}. Second, we split the list of item-user pairs with nonzero ratings into a training set ($75\%$) and a testing set ($25\%$) at random.  We fit a SVD model on the training set using the Python package Mangaki~\citep{vie2015mangaki}, which fills the zeroes in the ratings matrix of MovieLens. We computed the Root Mean Squared Error on the testing set, yielding RMSE$=0.96$. Knowing that the ratings are in $\{1,2,\dots,5\}$, it means that, in average, the mistake made by the SVD model on the predicted rating is not enough to accidentally classify a very bad movie as a good one. We considered a classification threshold at $\tau=2.5$.

\paragraph{Trust network: Epinions.} The Epinions data set~\citep{leskovec2010signed} contains users' ratings (from 1 to 5) on items. We use the same approach as for MovieLens to define the feedback model. To make it run smoothly across all baselines, we only kept users with at least 50 ratings, and items with at least 100 ratings, and used as item embedding matrix the first factor of the SVD, for a total of 160,417 items.

\paragraph{Drug repurposing: Cdataset~\citep{luo2016drug}, DNdataset~\citep{martinez2015drugnet}, Fdataset~\citep{gottlieb2011predict}, Gottlieb~\citep{gao2022dda}, LRSSL~\citep{liang2017lrssl}, and PREDICT~\citep{reda2023predict}.} Those data sets comprise information about drugs and diseases, along with the status of Phase 2 or 3 clinical trials involving each drug-disease pair: 0 means that the pair has not been tested in a Phase 2 or 3 clinical trial, 1 means that the clinical trial was successful in showing that the drug has a therapeutic effect on the disease, and -1 means that the clinical trial failed (\EG low accrual, emergence of adverse side effects). There are six types of drug features and five types of disease features across all data sets, corresponding to the similarity of a drug (respectively, a disease) to another drug (resp., disease) in the data set with respect to each type of feature. We give the description and relevant metrics regarding each drug repurposing data set in Table~\ref{tab:datasets}. 
We fitted the Heterogeneous Attention Network (HAN) algorithm~\citep{wang2019heterogeneous} as feedback model to those data sets, using the Python package stanscofi~\citep{reda2024stanscofi}, splitting nonzero ratings at random into training ($80\%$) and testing ($20\%$) sets. The Area Under the Curve on the testing set for the PREDICT data set (the richest data set in terms of features) was $0.92$.

 \begin{table}[t]
  \caption{Datasets in the benchmark. They correspond to the number of drugs and diseases involved in at least one nonzero drug-disease association. The rating matrix in the \texttt{Fdataset} comes from~\citet{gottlieb2011predict}, whereas the drug and disease features are from~\citet{luo2016drug}.}
  \label{tab:datasets}
  \begin{center}
    \begin{small}
      \begin{sc}
        \begin{tabular}{lcccccc}
          \toprule
         Name  & $\NITEMS$         & $\DI$      & Nb. users & $\DU$ & Nb. +1's & Nb. -1's  \\
          \midrule
     Cdataset & 663 & 663 & 409 &  409 	& 2,532 	& 0\\
    DNdataset &  550 &  1,490 & 360 & 4,516 	& 1,008 	& 0 	\\
     Fdataset & 593&  593 & 	313 &  313 & 	1,933 	& 0 	 \\
     Gottlieb &  593 &  1,779 &  313 & 313	& 1,933 	& 0 	\\
    LRSSL & 763 & 2,049 	& 681 & 681 	& 3,051 	& 0 	\\
      PREDICT &  1,014 &  1,642  & 941 & 1,490 	& 4,627 	& 132 \\
          \bottomrule
        \end{tabular}
      \end{sc}
    \end{small}
  \end{center}
  \vskip -0.1in
\end{table}
 
\subsection{Benchmark on xQuAD and Deep DPP}\label{app:benchmark}

xQuAD~\citep{santos2010exploiting} relies on the definition of subqueries in the ranking score for items. This ranking score is similar to the one for MMR~\citep{carbonell1998use}, except for not usng any diversity metrics, but considering relevance scores for items fitting each subquery. This means in particular that xQuAD only explicitly relies on the feedback model $\FREWARD{}{}$. To make the comparison fairer with other baselines relying on the diversity metrics, we consider as sub-query generation procedure the selection of the items in the history with a cosine similarity higher than $\alpha$ (similarly to what was defined for our algorithm \ALGO). To implement this, we resort to FAISS trees as described in Equation~\ref{eq:ALGO}. Then, the main difference between xQuAD and BDivRec is that xQuAD only considers the relevance scores of items and similar items in the history in its ranking score. 

Deep DPP~\citep{gartrell2018deep} learns a low-rank factor $V$ of the likelihood matrix for a DPP, meaning that the final likelihood matrix is $L=VV^\intercal$. This learning is based on sets of observed subsets of items, and backpropagation of the likelihood function. In our benchmark, we define the observed subsets needed for learning the likelihood matrix as batches of items of at most $\NBATCH$ in the user history. Contrary to \ALGO, Deep DPP does not include the estimated feedback values nor the diversity values in the ranking score. Moreover, if the user history is empty, we set the likelihood matrix to the identity matrix.

We perform tests on a synthetic data set with 1,500 items (SYNTHETIC1500), the Epinions data set~\citep{leskovec2010signed}, and on all drug repurposing data sets in Appendix~\ref{app:real_life}. Those results are shown in Table~\ref{tab:benchmark_app}. The results are somewhat expected, as xQuAD do not take into account explicitly the similarity scores between items and the corresponding algorithm has almost the same structure as MMR. Deep DPP leverages the power of DPPs to perform well (local) diversity-wise, but is worst at relevance. BDivRec clearly improves upon all baselines either in terms of diversity (xQuAD) or relevance (Deep DPP), highlighting the quality-diversity tradeoff we aimed for. However, note that xQuAD and Deep DPP were not developed for the optimization of the quality-global diversity tradeoff.

\begin{table}[tb!]
  \caption{Benchmark on real-life data sets: Cdataset (4 users, $\NBATCH=3$), DNdataset (4 users, $\NBATCH=3$), Epinions (3 users, $\NBATCH=3$: DeepDPP could not be run on Epinions, as the number of items is too large to compute the likelihood function in memory), Fdataset (4 users, $\NBATCH=3$), LRSSL (4 users, $\NBATCH=3$), MovieLens (4 users, $\NBATCH=3$), 
  PREDICT (4 users, $\NBATCH=3$) and SYNTHETIC1500 (4 users, $\NBATCH=5$).}
  \label{tab:benchmark_app}
  \begin{center}
    \begin{small}
      \begin{sc}
        \begin{tabular}{lrrrrrr}
          \toprule
           Cdataset         & \text{rel}  $\uparrow$  & \text{prec}  $\uparrow$ & \text{div}$^L$  $\uparrow$  & \text{div}$^G$ $\uparrow$ &  \text{div}$^+$ $\uparrow$ &  time $\downarrow$ \\
          \midrule
        Deep DPP& 0.36$\pm$ 0.03 & 0.35$\pm$ 0.05&\underline{0.48$\pm$ 0.01} & \underline{0.16$\pm$ 0.02} &  0.32$\pm$ 0.04 & 0.37$\pm$ 0.02\\
        MMR &     \textbf{0.91$\pm$ 0.01} & \textbf{1.00$\pm$ 0.0} &0.36$\pm$ 0.0 &0.14$\pm$ 0.01 &0.36$\pm$ 0.0 & \underline{0.35$\pm$ 0.01}\\
        xQuAD & \textbf{0.91$\pm$ 0.01}  & \textbf{1.00$\pm$ 0.0}  &  \underline{0.37$\pm$ 0.01} &0.14$\pm$ 0.01 & 0.37$\pm$ 0.01&0.54$\pm$ 0.02\\
        \textbf{\ALGO}& \underline{0.79$\pm$ 0.01} &  \underline{0.98$\pm$ 0.0} & \textbf{0.50$\pm$ 0.02} & \textbf{0.22$\pm$ 0.02} &\textbf{0.52$\pm$ 0.02}&   \textbf{0.25$\pm$ 0.0}\\
             \midrule
             DNdataset         & & &  & &  & \\
             \midrule
        Deep DPP& 0.23$\pm$ 0.06 & \underline{0.24$\pm$ 0.06} &0.81$\pm$ 0.01 & 0.72$\pm$ 0.01 &  \underline{0.23$\pm$ 0.06} & 1.58$\pm$ 0.08\\
        MMR &  \textbf{0.33$\pm$ 0.06}  &\textbf{0.25$\pm$ 0.07}& \underline{0.89$\pm$ 0.03} &\underline{0.86$\pm$ 0.03} &0.19$\pm$ 0.05 & \underline{1.57$\pm$ 0.07}\\
        xQuAD & \textbf{0.33$\pm$ 0.06} & \textbf{0.25$\pm$ 0.07}   &   0.82$\pm$ 0.01 &0.77$\pm$ 0.01&0.22$\pm$ 0.06&1.61$\pm$ 0.07 \\
        \textbf{\ALGO}& \underline{0.31$\pm$ 0.06} &  \textbf{0.25$\pm$ 0.07} & \textbf{1.01$\pm$ 0.0} & \textbf{0.98$\pm$ 0.01} &\textbf{0.25$\pm$ 0.07} &   \textbf{1.37$\pm$ 0.04}\\
             \midrule
             Epinions         && &   & &  & \\
             \midrule
            DeepDPP & -- & -- & -- & -- & -- &  --\\
            MMR &     \textbf{0.04$\pm$ 0.01} & 0.0$\pm$ 0.0  & 0.0$\pm$ 0.0  &0.0$\pm$ 0.0  &0.0$\pm$ 0.0 & \textbf{3.01$\pm$ 0.15}\\
            xQuAD & \textbf{0.04$\pm$ 0.01}  &0.0$\pm$ 0.0   &  0.0$\pm$ 0.0  &0.0$\pm$ 0.0 & 0.0$\pm$ 0.0 & 51.94$\pm$ 3.11\\
            \textbf{\ALGO}& \underline{0.02$\pm$ 0.01}&  0.0$\pm$ 0.0 & 0.0$\pm$ 0.0 & 0.0$\pm$ 0.0 & 0.0$\pm$ 0.0 &  \underline{17.15$\pm$ 0.56}\\
             \midrule
             Fdataset         & & &  & &  & \\
             \midrule
                Deep DPP& 0.48$\pm$ 0.03 & 0.55$\pm$ 0.05 &\underline{0.47$\pm$ 0.01} & 0.12$\pm$ 0.01  & \underline{0.59$\pm$ 0.06} &  0.28$\pm$ 0.01\\
        MMR &  \textbf{0.95$\pm$ 0.01 }  &\textbf{ 1.0$\pm$ 0.0 } & 0.35$\pm$ 0.01 &\underline{0.16$\pm$ 0.01 }&0.35$\pm$ 0.01 & \underline{0.27$\pm$ 0.01} \\
        xQuAD & \underline{0.93$\pm$ 0.01} &  \textbf{ 1.0$\pm$ 0.0 }    &   0.37$\pm$ 0.01 &0.14$\pm$ 0.01&0.37$\pm$ 0.01&0.44$\pm$ 0.01 \\
        \textbf{\ALGO}&0.81$\pm$ 0.01 & \textbf{ 1.0$\pm$ 0.0 }& \textbf{0.64$\pm$ 0.05} & \textbf{0.24$\pm$ 0.02} &\textbf{0.64$\pm$ 0.05} &  \textbf{0.19$\pm$ 0.01}\\
             \midrule
             Gottlieb         & & &  & &  & \\
             \midrule
                Deep DPP& 0.50$\pm$ 0.03 & 0.58$\pm$ 0.06 &0.64$\pm$ 0.01 & 0.19$\pm$ 0.02 & 0.55$\pm$ 0.05 &  \textbf{0.44$\pm$ 0.01}\\
        MMR & \textbf{0.94$\pm$ 0.01} &\textbf{1.00$\pm$ 0.0}& 0.64$\pm$ 0.02 &\underline{0.37$\pm$ 0.03}&0.64$\pm$ 0.02&  \underline{0.72$\pm$ 0.02} \\
        xQuAD & \underline{0.93$\pm$ 0.01}& \underline{0.99$\pm$ 0.0}    &   \underline{0.65$\pm$ 0.01} &0.36$\pm$ 0.03&\underline{0.65$\pm$ 0.01} &0.88$\pm$ 0.02 \\
        \textbf{\ALGO}&0.82$\pm$ 0.01 & 0.97$\pm$ 0.02& \textbf{0.76$\pm$ 0.01} & \textbf{0.40$\pm$ 0.02} & \textbf{0.78$\pm$ 0.01 }& \textbf{0.44$\pm$ 0.01}\\
             \midrule
             LRSSL         &  & & & & &  \\
             \midrule
        Deep DPP& 0.39$\pm$ 0.02 & \underline{0.40$\pm$ 0.03} &\textbf{0.90$\pm$ 0.01} & 0.53$\pm$ 0.05 & 0.64$\pm$ 0.03 &  \underline{0.48$\pm$ 0.02}\\
        MMR & \textbf{0.97$\pm$ 0.0} & \textbf{1.00$\pm$ 0.0} &0.82$\pm$ 0.02 &\underline{0.55$\pm$ 0.05} &\underline{0.82$\pm$ 0.02}&  0.53$\pm$ 0.02 \\
        xQuAD &\textbf{0.97$\pm$ 0.0} & \textbf{1.00$\pm$ 0.0}   &  0.80$\pm$ 0.02 &0.53$\pm$ 0.04&0.80$\pm$ 0.02 &0.71$\pm$ 0.02\\
        \textbf{\ALGO}& \underline{0.83$\pm$ 0.01} & \textbf{1.00$\pm$ 0.0} & \underline{0.86$\pm$ 0.01} & \textbf{0.56$\pm$ 0.04} & \textbf{0.86$\pm$ 0.01} & \textbf{0.36$\pm$ 0.01}\\
             \midrule
             MovieLens      & &    &  & &  & \\ 
             \midrule
            Deep DPP & 1.25$\pm$ 0.03 &0.09 $\pm$0.01 & \textbf{0.85 $\pm$0.00} & 0.02$\pm$ 0.00 &  0.22$\pm$ 0.02& 502.63 $\pm$23.86\\
            MMR&     \underline{  3.78$\pm$ 0.08} & \underline{0.84 $\pm$0.03} &  0.74 $\pm$0.02 &  \textbf{0.06$\pm$ 0.01} & \textbf{0.77$\pm$ 0.03}& \underline{27.25 $\pm$0.35}\\
            xQuAD &  \textbf{ 3.85$\pm$ 0.06} & \textbf{0.94 $\pm$0.01} & 0.65 $\pm$0.02 & \underline{0.05 $\pm$0.00} & \underline{0.68 $\pm$0.02}& 42.66 $\pm$0.33\\
            \textbf{\ALGO}& 3.49$\pm$ 0.09  & 0.76 $\pm$0.04 & \underline{0.80 $\pm$0.01 } & \textbf{0.06$\pm$ 0.01} &  0.60$\pm$ 0.03 & \textbf{26.83 $\pm$0.34}\\
             \midrule
            PREDICT        & & &  & &  & \\
             \midrule
            Deep DPP& 0.21$\pm$ 0.02&0.17$\pm$ 0.03&0.64$\pm$ 0.03&0.35$\pm$ 0.03&0.30$\pm$ 0.04&\textbf{0.73$\pm$ 0.03} \\ 
MMR&0.66 $\pm$0.02&0.86 $\pm$0.01&\underline{0.69 $\pm$0.02}&\textbf{0.60 $\pm$0.02}&\underline{0.70$\pm$ 0.02}&1.88$\pm$ 0.01 \\ 
            xQuAD & \textbf{0.79$\pm$ 0.02} & \underline{0.92$\pm$ 0.02} & 0.66$\pm$ 0.03 & 0.47$\pm$ 0.03 & 0.64$\pm$ 0.02 & \underline{1.00$\pm$ 0.06}\\ 
            \textbf{\ALGO}&\underline{0.76$\pm$ 0.02}&\textbf{0.93$\pm$ 0.01}&\textbf{1.02$\pm$ 0.02}&\underline{0.57$\pm$ 0.02}&\textbf{1.02$\pm$ 0.02}&1.15$\pm$ 0.01 \\
          \midrule
            SYNTHETIC1500       & &  &  & &  & \\
             \midrule
           Deep DPP& 0.50$\pm$ 0.0&0.47$\pm$ 0.01&\textbf{1.00$\pm$ 0.0}&\underline{0.32$\pm$ 0.0}&\underline{0.98$\pm$ 0.0}&0.70$\pm$ 0.01 \\ 
           MMR & \textbf{0.65$\pm$ 0.0} & \textbf{1.00$\pm$ 0.0} & \underline{0.00$\pm$ 0.0} & 0.00$\pm$ 0.0 & 0.00$\pm$ 0.0 & \textbf{0.03$\pm$ 0.00}\\ 
           xQuAD & \textbf{0.65$\pm$ 0.0} & \textbf{1.00$\pm$ 0.0} & \underline{0.00$\pm$ 0.0} & 0.00$\pm$ 0.0 & 0.00$\pm$ 0.0 & 0.76$\pm$ 0.01\\ 
          \textbf{\ALGO}&\underline{0.53$\pm$ 0.0}&\underline{0.82$\pm$ 0.02}&\textbf{1.00$\pm$ 0.0}&\textbf{0.94$\pm$ 0.00}&\textbf{1.01$\pm$ 0.00}&\underline{0.06$\pm$ 0.00}\\
          \bottomrule
        \end{tabular}
      \end{sc}
    \end{small}
  \end{center}
  \vskip -0.1in
\end{table}

\clearpage

\section{Complementary experiments}\label{app:compl_experiments} 

We report and describe here the comparison between the \textsc{SAMPLING} and \textsc{MAXIMIZATION} strategies, and the problem of scalability of Markov DPPs~\citep{affandi2012markov}. All results are shown in Tables~\ref{tab:compl_experiments}-\ref{tab:compl_experiments2}.

We apply both the \textsc{SAMPLING} and \textsc{MAXIMIZATION} strategies for all recommender systems on the SYNTHETIC750 data set. These results confirm that, for the quality-diversity trade-off, \ALGO{} with the \textsc{MAXIMIZATION} strategy and MMR are the top contenders. Moreover, they show that the \textsc{SAMPLING} indeed encourages (global) diversity as evidenced by the $\text{div}^\text{G}$ values and as reported in prior work~\citep{kathuria2016batched}. However, it results in a large loss in precision ($\text{prec}$) and relevance ($\text{rel}$), it is more time-consuming, and does not quite achieve the best quality-diversity trade-off ($\text{div}^{+}$). The \textsc{SAMPLING} strategy might be useful in the case when the recommender system must not recommend the same batch for users with the same embeddings and history, and when we are willing to recommend possibly irrelevant items. However, when the focus is on the optimization of the quality-diversity trade-off and computational efficiency, the \textsc{MAXIMIZATION} strategy might be more suitable.

We also ran the same type of experiments 
on SYNTHETIC30k, SYNTHETIC3M and SYNTHETIC15M. Beyond $5\,000$ items, MarkovDPP~\citep{affandi2012markov} is actually computationally intractable to run, hence we left it out of the benchmark for larger data sets. The observations made on SYNTHETIC750 (superiority of the \textsc{MAXIMIZATION} strategy over the \textsc{SAMPLING} one in terms of relevance) from those experiments are confirmed on larger synthetic data sets. Moreover, even if MMR might have better results on relevance-related metrics ($\text{rel}$, $\text{prec}$), this baseline clearly fails for diversity metrics ($\text{div}^\text{G}$, $\text{div}^{+}$), especially on larger data sets. As the number of items $\NITEMS$ increases, the difference in performance between algorithms from the DQD family decreases. This might be due to the fact that the more items there are, the less important is the impact of user history on the selection of items. We also applied recommender systems to MovieLens and PREDICT. We only show in these tables the results with the \textsc{MAXIMIZATION} strategy, on account of the observations on the synthetic data sets. 

 \begin{table}[t]
  \caption{Benchmark on MovieLens (4 movie streamers/users, $\NBATCH=3$) and PREDICT (9 diseases/users, $\NBATCH=3$).}
  \label{tab:compl_experiments2}
  \begin{center}
    \begin{small}
      \begin{sc}
        \begin{tabular}{lrrrrrr}
          \toprule
          MovieLens  & &  &   \\
        \textsc{SAMPLING} & \text{rel} $\uparrow$ & \text{prec} $\uparrow$ & \text{div}$^L$  $\uparrow$ & \text{div}$^G$  $\uparrow$ & \text{div}$^+$ $\uparrow$ & Time $\downarrow$  \\
          \midrule
QDDecomp.   & 1.01$\pm$ 0.02 &   0.01$\pm$ 0.01  &0.84$\pm$ 0.01 & \underline{0.05$\pm$ 0.00} &   0.02$\pm$ 0.02 & 70.96 $\pm$1.35\\
CondDPP  &  1.01$\pm$ 0.0   & 0.0$\pm$ 0.0 &  \underline{0.85$\pm$ 0.0} & \textbf{0.06$\pm$ 0.01}  & 0.01$\pm$ 0.00  & 2.70 $\pm$2.06\\
$\varepsilon$-Greedy      &    1.01$\pm$ 0.02  &  0.01$\pm$ 0.01  &0.84$\pm$ 0.01 & \underline{0.05$\pm$ 0.00} &   0.02$\pm$ 0.02 & 66.55 $\pm$1.55     \\     
\textbf{\ALGO}     &       1.01$\pm$ 0.01  &  0.0$\pm$ 0.01 &  \textbf{0.86$\pm$ 0.00} & \textbf{0.06$\pm$ 0.01} &   0.01$\pm$ 0.01 & 66.95 $\pm$1.67      \\    
        \midrule
        \textsc{MAXIMIZ.} & &  &   \\
QDDecomp.   & \underline{3.49$\pm$ 0.09} & \underline{0.76$\pm$ 0.04} & \underline{0.80$\pm$ 0.01} & \textbf{0.06$\pm$ 0.01} & 0.60$\pm$ 0.03& 25.13 $\pm$0.30  \\ 
CondDPP   &   3.01$\pm$ 0.12 &  0.65$\pm$ 0.04 & \textbf{0.83$\pm$ 0.01} & \textbf{0.06$\pm$ 0.01} & \underline{0.65$\pm$ 0.04} &25.72 $\pm$0.34             \\    
$\varepsilon$-Greedy       &    \underline{3.49$\pm$ 0.09} & \underline{0.76$\pm$ 0.04}  &\underline{0.80$\pm$ 0.01}  &\textbf{0.06$\pm$ 0.01} & 0.60$\pm$ 0.03  &  25.72 $\pm$0.34\\
\textbf{\ALGO}         & \underline{3.49$\pm$ 0.09}  &\underline{ 0.76 $\pm$0.04} & \underline{0.80 $\pm$0.01 } & \textbf{0.06$\pm$ 0.01} &  0.60$\pm$ 0.03 & 26.83 $\pm$0.34\\
        \midrule
MMR         &     \textbf{  3.78$\pm$ 0.08} & \textbf{0.84 $\pm$0.03} &  0.74 $\pm$0.02 &  \textbf{0.06$\pm$ 0.01} & \textbf{0.77$\pm$ 0.03}& 27.25 $\pm$0.35\\
 \midrule
  \midrule
          PREDICT & &  &   \\
        \textsc{SAMPLING} & \text{rel} $\uparrow$ & \text{prec} $\uparrow$ & \text{div}$^L$  $\uparrow$ & \text{div}$^G$  $\uparrow$ & \text{div}$^+$ $\uparrow$ & Time $\downarrow$  \\
          \midrule
QDDecomp.&0.51$\pm$ 0.01&0.52$\pm$ 0.02&0.68$\pm$ 0.01&0.44$\pm$ 0.02&0.77$\pm$ 0.03&1.02$\pm$ 0.01 \\ 
CondDPP&0.53$\pm$ 0.01&0.57$\pm$ 0.01&0.69$\pm$ 0.01&0.45$\pm$ 0.02&0.86$\pm$ 0.02&1.25$\pm$ 0.01 \\ 
$\varepsilon$-Greedy&0.51$\pm$ 0.01&0.52$\pm$ 0.02&0.68$\pm$ 0.01&0.44$\pm$ 0.02&0.77$\pm$ 0.03&1.01$\pm$ 0.0 \\ 
\textbf{\ALGO}&0.52$\pm$ 0.01&0.54$\pm$ 0.02&0.77$\pm$ 0.01&0.45$\pm$ 0.02&0.87$\pm$ 0.03&1.29$\pm$ 0.01 \\ 
        \midrule
        \textsc{MAXIMIZ.} & &  &   \\
QDDecomp.&\textbf{0.76$\pm$ 0.02}&\textbf{0.93$\pm$ 0.01}&\textbf{1.02$\pm$ 0.02}&0.57$\pm$ 0.02&\textbf{1.02$\pm$ 0.02}&0.87$\pm$ 0.01 \\ 
CondDPP&\underline{0.75$\pm$ 0.02}&\textbf{0.93$\pm$ 0.01}&\underline{0.87$\pm$ 0.01}&\textbf{0.63$\pm$ 0.02}&\underline{0.88$\pm$ 0.01}&1.07$\pm$ 0.01 \\ 
$\varepsilon$-Greedy&\textbf{0.76$\pm$ 0.02}&\textbf{0.93$\pm$ 0.01}&\textbf{1.02$\pm$ 0.02}&0.57$\pm$ 0.02&\textbf{1.02$\pm$ 0.02}&0.87$\pm$ 0.01 \\ 
\textbf{\ALGO}&\textbf{0.76$\pm$ 0.02}&\textbf{0.93$\pm$ 0.01}&\textbf{1.02$\pm$ 0.02}&0.57$\pm$ 0.02&\textbf{1.02$\pm$ 0.02}&1.15$\pm$ 0.01 \\
        \midrule
MMR&0.66 $\pm$0.02&\underline{0.86 $\pm$0.01}&0.69 $\pm$0.02&\underline{0.60 $\pm$0.02}&0.70$\pm$ 0.02&1.88$\pm$ 0.01 \\ 
          \bottomrule
        \end{tabular}
      \end{sc}
    \end{small}
  \end{center}
  \vskip -0.1in
\end{table}

 \begin{table}[t]
  \caption{Benchmark on SYNTHETIC750 (6 users, $\NBATCH=3$), SYNTHETIC30k (4 users, $\NBATCH=3$), SYNTHETIC3M (4 users, $\NBATCH=3$) and SYNTHETIC15M (2 users, $\NBATCH=5$).}
  \label{tab:compl_experiments}
  \begin{center}
    \begin{small}
      \begin{sc}
        \begin{tabular}{lrrrrrr}
          \toprule
          SYNTHETIC750  & &  &   \\
        \textsc{SAMPLING} & \text{rel} $\uparrow$ & \text{prec} $\uparrow$ & \text{div}$^L$  $\uparrow$ & \text{div}$^G$  $\uparrow$ & \text{div}$^+$ $\uparrow$ & Time $\downarrow$  \\
          \midrule
QDDecomp.&0.51$\pm$ 0.0&0.61$\pm$ 0.01&\textbf{1.01$\pm$ 0.0}&\underline{0.95$\pm$ 0.0}&\textbf{1.01$\pm$ 0.0}&0.06$\pm$ 0.0 \\ 
CondDPP&0.51$\pm$ 0.0&0.64$\pm$ 0.01&\underline{1.0$\pm$ 0.0}&0.94$\pm$ 0.01&0.97$\pm$ 0.01&0.07$\pm$ 0.0 \\ 
$\varepsilon$-Greedy&0.51$\pm$ 0.0&0.61$\pm$ 0.01&\textbf{1.01$\pm$ 0.0}&\underline{0.95$\pm$ 0.0}&1.01$\pm$ 0.0&0.06$\pm$ 0.0 \\ 
MarkovDPP&0.51$\pm$ 0.0&0.59$\pm$ 0.01&\underline{1.0$\pm$ 0.0}&0.17$\pm$ 0.0&0.91$\pm$ 0.01&0.08$\pm$ 0.0 \\ 
\textbf{\ALGO}&0.51$\pm$ 0.0&0.63$\pm$ 0.01&\textbf{1.01$\pm$ 0.0}&\underline{0.95$\pm$ 0.0}&\underline{0.98$\pm$ 0.01}&0.14$\pm$ 0.0\\
        \midrule
        \textsc{MAXIMIZ.} & &  &   \\
QDDecomp.&\textbf{0.61$\pm$ 0.0}&\textbf{1.0$\pm$ 0.0}&\textbf{1.01$\pm$ 0.0}&0.9$\pm$ 0.01&\textbf{1.01$\pm$ 0.0}&0.03$\pm$ 0.0\\ 
CondDPP&\underline{0.6$\pm$ 0.0}&\textbf{1.0$\pm$ 0.0}&\textbf{1.01$\pm$ 0.0}&0.89$\pm$ 0.01&\textbf{1.01$\pm$ 0.0}&0.04$\pm$ 0.0\\ 
$\varepsilon$-Greedy&\textbf{0.61$\pm$ 0.0}&\textbf{1.0$\pm$ 0.0}&\textbf{1.01$\pm$ 0.0}&0.9$\pm$ 0.01&\textbf{1.01$\pm$ 0.0}&0.03$\pm$ 0.0\\ 
MarkovDPP&0.58$\pm$ 0.0&\underline{0.94$\pm$ 0.01}&\textbf{1.01$\pm$ 0.0}&0.17$\pm$ 0.0&\underline{0.98$\pm$ 0.01}&0.06$\pm$ 0.0\\ 
\textbf{\ALGO}&\textbf{0.61$\pm$ 0.0}&\textbf{1.0$\pm$ 0.0}&\textbf{1.01$\pm$ 0.0}&\underline{0.95$\pm$ 0.0}&\textbf{1.01$\pm$ 0.0}&0.12$\pm$ 0.0\\ 
        \midrule
MMR&\underline{0.6$\pm$ 0.0}&\textbf{1.0$\pm$ 0.0}&\textbf{1.01$\pm$ 0.0}&\textbf{0.96$\pm$ 0.0}&\textbf{1.01$\pm$ 0.0}&0.08$\pm$ 0.0\\ 
 \midrule
  \midrule
          SYNTHETIC30k  & &  &   \\
        \textsc{SAMPLING} & \text{rel} $\uparrow$ & \text{prec} $\uparrow$ & \text{div}$^L$  $\uparrow$ & \text{div}$^G$  $\uparrow$ & \text{div}$^+$ $\uparrow$ & Time $\downarrow$  \\
          \midrule
QDDecomp.&0.5$\pm$ 0.0&0.33$\pm$ 0.0&\textbf{1.01$\pm$ 0.0}&0.96$\pm$ 0.0&\underline{1.0$\pm$ 0.0}&47.05$\pm$ 0.15 \\ 
CondDPP&0.5$\pm$ 0.0&0.33$\pm$ 0.0&\textbf{1.01$\pm$ 0.0}&0.96$\pm$ 0.0&\underline{1.0$\pm$ 0.0}&47.19$\pm$ 0.12 \\ 
$\varepsilon$-Greedy&0.5$\pm$ 0.0&0.33$\pm$ 0.0&\textbf{1.01$\pm$ 0.0}&0.96$\pm$ 0.0&\underline{1.0$\pm$ 0.0}&47.49$\pm$ 0.32 \\ 
\textbf{\ALGO}&0.5$\pm$ 0.0&0.33$\pm$ 0.0&\textbf{1.01$\pm$ 0.0}&0.96$\pm$ 0.0&\underline{1.0$\pm$ 0.0}&48.1$\pm$ 0.12 \\ 
        \midrule
        \textsc{MAXIMIZ.} & &  &   \\
QDDecomp.&\underline{0.54$\pm$ 0.0}&\underline{0.92$\pm$ 0.02}&\textbf{1.01$\pm$ 0.0}&0.96$\pm$ 0.0&\textbf{1.01$\pm$ 0.0}&0.79$\pm$ 0.0 \\ 
CondDPP&\underline{0.54$\pm$ 0.0}&\underline{0.92$\pm$ 0.02}&\textbf{1.01$\pm$ 0.0}&0.96$\pm$ 0.0&\textbf{1.01$\pm$ 0.0}&1.05$\pm$ 0.01 \\ 
$\varepsilon$-Greedy&\underline{0.54$\pm$ 0.0}&\underline{0.92$\pm$ 0.02}&\textbf{1.01$\pm$ 0.0}&0.96$\pm$ 0.0&\textbf{1.01$\pm$ 0.0}&0.79$\pm$ 0.0 \\  
\textbf{\ALGO}&\underline{0.54$\pm$ 0.0}&\underline{0.92$\pm$ 0.02}&\textbf{1.01$\pm$ 0.0}&0.96$\pm$ 0.0&\textbf{1.01$\pm$ 0.0}&1.45$\pm$ 0.0 \\ 
        \midrule
MMR&\textbf{0.68$\pm$ 0.0}&\textbf{1.0$\pm$ 0.0}&\underline{1.0$\pm$ 0.0}&0.96$\pm$ 0.0&\underline{1.0$\pm$ 0.0}&1.73$\pm$ 0.0 \\
          \midrule
          \midrule
          SYNTHETIC3M  & &  &   \\
        \textsc{SAMPLING} & \text{rel} $\uparrow$ & \text{prec} $\uparrow$ & \text{div}$^L$  $\uparrow$ & \text{div}$^G$  $\uparrow$ & \text{div}$^+$ $\uparrow$ & Time $\downarrow$  \\
          \midrule
QDDecomp.  & 0.5$\pm$ 0.0 &\underline{ 0.58$\pm$ 0.03} & \textbf{1.01$\pm$ 0.0} & \textbf{0.97$\pm$ 0.0 }& \textbf{1.01$\pm$ 0.0} &  248.8$\pm$ 10.44\\
CondDPP  &  0.5$\pm$ 0.0 & \underline{ 0.58$\pm$ 0.03}  & \textbf{1.01$\pm$ 0.0} & \textbf{0.97$\pm$ 0.0 } & \textbf{1.01$\pm$ 0.0}   &331.23$\pm$ 6.84\\
$\varepsilon$-Greedy    &  0.5$\pm$ 0.0 & \underline{ 0.58$\pm$ 0.03}  & \textbf{1.01$\pm$ 0.0} & \textbf{0.97$\pm$ 0.0 } & \textbf{1.01$\pm$ 0.0} &  235.06$\pm$ 4.81\\
\textbf{\ALGO}      &   0.5$\pm$ 0.0 & \underline{ 0.58$\pm$ 0.03}  & \textbf{1.01$\pm$ 0.0} & \textbf{0.97$\pm$ 0.0 } & \textbf{1.01$\pm$ 0.0} & 383.04$\pm$ 16.32\\
        \midrule
        \textsc{MAXIMIZ.} & &  &   \\
QDDecomp.   &   \underline{0.55$\pm$ 0.0}  &  \textbf{1.0$\pm$ 0.0} & \textbf{1.01$\pm$ 0.0} & \textbf{0.97$\pm$ 0.0 } & \textbf{1.01$\pm$ 0.0} &    232.0$\pm$ 4.95\\
CondDPP    & \underline{ 0.55$\pm$ 0.0 }  & \textbf{1.0$\pm$ 0.0 }& \textbf{1.01$\pm$ 0.0} & \textbf{0.97$\pm$ 0.0 } & \textbf{1.01$\pm$ 0.0} & 297.51$\pm$ 6.56\\
$\varepsilon$-Greedy     & \underline{  0.55$\pm$ 0.0}  & \textbf{1.0$\pm$ 0.0 } & \textbf{1.01$\pm$ 0.0}&  \textbf{0.97$\pm$ 0.0 } & \textbf{1.01$\pm$ 0.0}&   224.08$\pm$ 7.87\\
\textbf{\ALGO}      & \underline{ 0.55$\pm$ 0.0 }  & \textbf{1.0$\pm$ 0.0 } & \textbf{1.01$\pm$ 0.0} & \textbf{0.97$\pm$ 0.0 } & \textbf{1.01$\pm$ 0.0} & 372.8$\pm$ 18.67\\
        \midrule
MMR         &    \textbf{0.73$\pm$ 0.0} &  \textbf{ 1.0$\pm$ 0.0} & \underline{0.02$\pm$ 0.0 } & \underline{0.0$\pm$ 0.0} &  \underline{0.02$\pm$ 0.0}  &569.72$\pm$ 12.82\\
\midrule
\midrule
SYNTHETIC15M & & & & \\
        \textsc{MAXIMIZ.} & \text{rel} $\uparrow$ & \text{prec} $\uparrow$ & \text{div}$^L$  $\uparrow$ & \text{div}$^G$  $\uparrow$ & \text{div}$^+$ $\uparrow$ & Time $\downarrow$  \\
          \midrule
CondDPP & \underline{0.53$\pm$ 0.0} & \underline{0.8$\pm$ 0.04} & \textbf{1.0$\pm$ 0.0} & \textbf{0.94$\pm$ 0.0} & \textbf{1.01$\pm$ 0.0}   & 63.03$\pm$ 0.58\\
\textbf{\ALGO}     &    \underline{0.53$\pm$ 0.0} & \underline{0.8$\pm$ 0.04} & \textbf{1.0$\pm$ 0.0} & \textbf{0.94$\pm$ 0.0} & \textbf{1.01$\pm$ 0.0} & 251.66$\pm$ 0.73\\
        \midrule
MMR      &       \textbf{0.73$\pm$ 0.0 } & \textbf{1.0$\pm$ 0.0}  &  \underline{0.0$\pm$ 0.0} &   \underline{0.0$\pm$ 0.0} &   \underline{0.0$\pm$ 0.0} & 98.2$\pm$ 0.21\\
          \bottomrule
        \end{tabular}
      \end{sc}
    \end{small}
  \end{center}
  \vskip -0.1in
\end{table}

\clearpage

\section{About MovieLens and diversity}\label{app:discussion_movielens}

To propose a more quantitative analysis of the performance of \ALGO{} and MMR on the MovieLens data set, we compute several metrics on all tested real-life data sets in Table~\ref{tab:datasets_metrics}. For each metric, we also report the ratio of the metric value for our contribution \ALGO{} over the metric value for MMR. For instance, using the values from Table~\ref{tab:compl_experiments2}, we obtain a REL ratio of $\frac{3.48}{3.78}=0.92$, where $3.78$ is the relevance value achieved by MMR and $3.48$ is the relevance value achieved by B-DivRec. We denote in bold type the cases where the ratio is greater or equal to 1, meaning similar performance or superiority of \ALGO{} over MMR.

As a proxy for the implicit feedback bias, we compute the sparsity number, meaning the percentage of observed feedback (that is, usually the number of non-zero values in the user-item rating matrix). The smaller the sparsity number, the greater the bias, as the feedback model is then trained on a smaller number of observed values. 

We measure the popularity bias by the Gini coefficient~\citep{braun2023metrics}, denoted Gini in the table, that quantifies inequality in the distributions of item popularity scores. The score is in the range [0,1], where 0 represents no bias, and the larger, the more popularity bias is present.

We also compute metrics regarding the diversity of the user history and data set. History-wise diversity (Hist-Div) represents the global diversity across items in the user history, averaged across users, that is, the collinearity of embeddings of items in the history. History-wise diversity for a given user is DIV$^G$ computed when $S^t=\emptyset$ in Equation~\ref{eq:def_metrics}. Intrinsic diversity (Diversity) is the volume across all items in the data set. We choose to set the volume of an empty set to $0$, when the user history is empty.

We also tested new, richer embeddings of movies in MovieLens using the MiniLM 384 model~\citep{wang2020minilm} instead of the Universal Sentence Encoder model~\citep{cer2018universal} to generate embeddings from the movie title and keywords. We name the corresponding data set RicherMovieLens, and we run it with the same parameters as MovieLens in the main text.

Note that Fdataset and Gottlieb have the same rating matrix, but features in Gottlieb are richer (as evidenced by the history-wise diversity value). As mentioned in the paper, in MovieLens, the history-wise diversity is close to 0, meaning that there are at least two embeddings in the history of almost each user which are collinear. This impairs the computation of the score (Equation~\ref{eq:qdt_metric}) for \ALGO, whereas this effect is lessened in MMR, as MMR only relies on the maximum of pairwise diversity scores (Section~\ref{sec:related_works}). However, as illustrated by the performance ratios, taking into account diversity over sets incurs a higher diversity in recommendations. 

Our assumption is that the poor history-wise diversity metrics on MovieLens are mostly due to collinear item embeddings, where extremely similar movies (like successive entries in a series) have almost the same embedding, collapsing the volume-based diversity metric. A possible solution to mitigate this issue is to consider a "representative" set of items in the user history, instead of the whole history which might contain collinear item embeddings. In Section~\ref{app:related_works}, we discuss at length (ridge) leverage scores, which can be used to determine representative points in a set. Representative points are points which are the most "unique", that is, decorrelated from other points.

\begin{table}[t]
  \caption{RicherMovieLens and MovieLens (4 users, $\NBATCH=3$, $\tau=2.5$) with \textsc{MAXIMIZATION}. The results for MovieLens are the same as in Table~\ref{tab:benchmark}.}
  \label{tab:richer_movielens}
  \begin{center}
    \begin{small}
      \begin{sc}
        \begin{tabular}{lllllll}
          \toprule
         MovieLens  & \text{rel} $\uparrow$        & \text{div}$^L$ $\uparrow$    & \text{div}$^G$ $\uparrow$  & \text{div}$^+$ $\uparrow$  & Time $\downarrow$ \\
          \midrule
MMR&     \textbf{  3.78$\pm$ 0.08} & 0.74 $\pm$0.02 &  \textbf{0.06$\pm$ 0.01} & \textbf{0.77$\pm$ 0.03}& 27.25 $\pm$0.35\\
\textbf{\ALGO}& 3.49$\pm$ 0.09  & \textbf{0.80 $\pm$0.01 } & \textbf{0.06$\pm$ 0.01} &  0.60$\pm$ 0.03 & \textbf{26.83 $\pm$0.34}\\
\midrule
\midrule
         RicherMovieLens   &  &  &  &  & \\
         \midrule
MMR  &  \textbf{3.91$\pm$ 0.07} & 0.72$\pm$ 0.02 & 0.05$\pm$ 0.0 & \textbf{0.76$\pm$ 0.02} & \textbf{19.00$\pm$ 0.01}\\
\textbf{\ALGO}  & 3.68$\pm$ 0.06 &\textbf{0.81$\pm$ 0.0} & 0.05$\pm$ 0.0 & 0.73$\pm$ 0.03 & 19.31$\pm$ 0.03\\
          \bottomrule
        \end{tabular}
      \end{sc}
    \end{small}
  \end{center}
  \vskip -0.1in
\end{table}

 \begin{table}[t]
  \caption{Metrics related to known bias in recommender systems on real-life data sets, and relative performance of \ALGO{} compared to MMR.}
  \label{tab:datasets_metrics}
  \begin{center}
    \begin{small}
      \begin{sc}
        \begin{tabular}{lllllllll}
          \toprule
         Data set  & MovieLens        & PREDICT    & Epinions & Gottlieb & Fdataset & Cdataset & DNdataset\\
          \midrule
Sparsity (\%) & 1.50 & 0.50 & 0.17 & 1.18  & 1.18 &1.09& 0.03\\ 
Gini & 0.94 &  0.96 & 0.35 & 0.96 & 0.96 & 0.96& 1.00 \\ 
Hist-Div &  0.0 & 0.61 & 0.03 & 0.42 & 0.28& 0.29& 0.25\\
Diversity & 0.0 & 0.0 & 0.0 & 0.0 & 0.0& 0.0& 0.0\\
REL ratio & 0.92 & \textbf{1.15} & 0.61 & 0.87 & 0.86 & 0.87& 0.94\\
DIV$^L$ ratio & \textbf{1.08} & \textbf{1.48} & \textbf{1.00} & \textbf{1.19} & \textbf{1.82}& \textbf{1.39}& \textbf{1.13} \\
DIV$^G$ ratio & \textbf{1.00} & 0.95 & \textbf{1.00} & \textbf{1.09} & \textbf{1.52}& \textbf{1.54}& \textbf{1.13}\\
DIV$^+$ ratio & 0.78 & \textbf{1.46} & \textbf{1.00} & \textbf{1.21} & \textbf{1.82} & \textbf{1.42}& \textbf{1.13}\\
\midrule
\midrule
         Data set  & LRSSL       & RicherMovieLens   &  &  &  &  & \\
         \midrule
Sparsity    &  0.72 & 1.36 &  &  &  &  & \\
Gini  & 0.97 &0.94 &  &  &  &  & \\
Hist-Div   & 0.53 & 0.0 &  &  &  &  & \\
Diversity   & 0.0 & 0.0 &  &  &  &  & \\
REL ratio   & 0.86 & 0.94 &  &  &  &  & \\
DIV$^L$ ratio  & \textbf{1.05} & \textbf{1.13} &  &  &  &  & \\
DIV$^G$ ratio   & \textbf{1.01} & 0.98 &  &  &  &  & \\
DIV$^+$ ratio   & \textbf{1.05} &  0.97 &  &  &  &  & \\
          \bottomrule
        \end{tabular}
      \end{sc}
    \end{small}
  \end{center}
  \vskip -0.1in
\end{table}

\end{document}